\documentclass[10pt]{iopart}


\usepackage{mathrsfs}
\usepackage{amssymb}
\usepackage{amsthm}

\usepackage{iopams}  
\bibliographystyle{iopart-num}
\usepackage{citesort}

\usepackage{color}

 \usepackage{graphics}
 \usepackage{graphicx}
 \usepackage{epsfig}

 \newtheorem{lemma}{Lemma}[section]
 \newtheorem{proposition}{Proposition}[section]
 \newtheorem{theorem}{Theorem}[section]
 \newtheorem{remark}{Remark}[section]
 \newtheorem{corollary}{Corollary}[section]

\newcommand{\ffd}{\overline{\overline{f}}}
\newcommand{\ff}{\overline{f}}

\begin{document}

\title[Freezing phase transition in a fractal potential]{Freezing phase transition in a fractal potential}

\author{Cesar Maldonado}

\address{IPICYT/Divisi\'{o}n de Matem\'{a}ticas Aplicadas. Camino a la Presa San Jos\'{e} 2055, Lomas 4a. secci\'{o}n, C.P. 78216, San Luis Potos\'{i}, S.L.P. Mexico. }

\ead{cesar.maldonado@ipicyt.edu.mx}

\author{R. Salgado-Garc\'{\i}a}

\address{Centro de Investigaci\'on en Ciencias - IICBA, Universidad Aut\'onoma del Estado de Morelos. Avenida Universidad 1001, colonia Chamilpa, C.P. 62209, Cuernavaca Morelos, Mexico.}

\ead{raulsg@uaem.mx}

%
%

\begin{abstract}

In this work we propose a simple example of a one-dimensional thermodynamic system where non-interacting particles are allowed to move over the $[0,1]$ interval, which are influenced by a potential with a fractal structure. We prove that the system exhibits a phase transition at a finte temperature, which is characterized by the fact that the Gibbs-Boltzmann probability measure passes from being absolutely continuous with respect to Lebesgue (at high temperature) to being singular continuous (at low temperatures). We prove that below the critical temperature (when the Gibbs-Boltzmann probability measure is singular continuous) the probability measure is supported on the middle-third Cantor set and that further lowering the temperature, the probability measure does not change anymore. This means that, in some sense, the system reaches the ground-state before the zero temperature, indicating  that the system ``freezes'' at a positive temperature.


\end{abstract}

%
%
%
\maketitle
%
%

\section{Introduction}
\label{sec:intro}

Thermodynamical systems having some underlying characteristics of fractality has been a subject of intense research~\cite{ben2000diffusion,bunde2013fractals,NovikovCh7Fractals2006,olemskoui1993application}. It has been noticed that the fractal nature of these class of systems (both, in equilibrium and out of equilibrium) affects, in several different ways the observed thermodynamic properties~\cite{arneodo1995thermodynamics,yang1987approximate,gefen1984phase,wang2003fractal}. Spin systems on fractal lattices~\cite{gefen1984phase,yang1987approximate}, the energy landscape of protein folding~\cite{lidar1999fractal,neusius2008subdiffusion,granek2011proteins}, the  effective thermal conductivity of liquids with nanoparticle inclusions~\cite{wang2003fractal} and the diffusion of particles on fractal geometries~\cite{sandev2017anomalous,barlow1998diffusions} are a few examples of thermodynamical systems having a underlying fractal structure exhibiting properties influenced by the fractality. 

In this paper we provide a simple example of a one-dimensional thermodynamic system with a fractal structure, which can be solved in an explicit way. This system consists of an ensemble of non-interacting particles moving on a one-dimensional fractal potential. It exhibits some features that, up to our knowledge, has not been  previously reported. For example, most of the models mentioned above assume that the particles move on a geometrically confined region with a fractal structure. In contrast, in our model, the particles can move on a non-fractal space, the interval $[0,1]$, but they are influenced by a potential with a fractal structure. We show that this system undergoes a phase transition, in which, below certain critical temperature; all the particles collapse into a fractal set. More precisely, the probability measure characterizing the equilibrium state (the Gibbs-Boltzmann probability measure) is absolutely continuous with respect to Lebesgue at high temperatures but, it is singular-continuous at low temperatures.

In this context, our work is related to the freezing phase transition phenomenon, which was studied in one-dimensional spin systems with long-range interactions~\cite{bruin2013renormalization} and is defined as the transition in which the support of the equilibrium state collapses into the ground-state~\footnote{The term ground-state in physics is the set of configurations minimizing the energy.} at a finite temperature. This means that, below the critical temperature the system does not change anymore as the temperature is lowered further (this is why this kind of phase transition is called ``freezing''~\cite{bruin2013renormalization}).  In contrast, in our model, the fractal potential can be defined in such a way that the set of configurations minimizing the energy is empty, implying that it has an empty ground-state. Despite this property, the system still exhibits the phase transition and the support of the Gibbs-Boltzmann probability measure still collapses into a fractal set (the middle-third Cantor set by construction). 

In this work we first show that our model for a fractal potential is well defined. We show that a slight modification of the potential model can be made it such that the minima of the fractal potential is just the middle-third Cantor set. We show that  the Gibbs-Boltzmann distribution associated to the fractal potential undergoes through a phase transition at a finite temperature. Subsequently we show that above the critical temperature the Gibbs-Boltzmann distribution is absolutely continuous with respect to Lebesgue and below, the Gibbs-Boltzmann distribution is singular-continuous, in fact it corresponds to the Cantor distribution. More over, the critical temperature is actually related to the fractal dimension of the Cantor set.

The paper is organized as follows: in Section \ref{sec:model} we give the concepts and definitions used in the paper. Section~\ref{sec:main} is devoted to the statements of our results. We give comments and some concluding remarks in section~\ref{sec:concluding}. And finally in sections~\ref{sec:proofs} and~\ref{sec:proof-lemmas}, we give the proofs for our results and the technical lemmas, respectively.

\section{The Model and Generalities}
\label{sec:model}

\subsection{Cantor structure of the model}
\label{subsec:Cantor}

First of all, let us give the standard definition of the middle-third  Cantor set as the limit of a monotonically decreasing sequence of finite union of intervals.  Let $C_{0}:=[0,1]$ be the first element of the sequence, and define $C_{1}:=[0,1/3]\cup[2/3,1]$, obtained by removing the middle third interval (see Figure~\ref{fig:CnBn}), and so on. Then, the middle-third Cantor set can be defined as the limit of the sequence of $C_{m}$, as follows, 
\begin{equation*}
C := \bigcap_{m=1}^\infty C_m.
\end{equation*}
By convenience, for every $m\in\mathbb{N}$, let us write the removed middle-third intervals as follows,
\begin{equation*}
B_m := C_{m-1}\setminus C_{m}. 
\end{equation*}
In Figure~\ref{fig:CnBn} we can see a representation of $C_n$ and $B_n$ for the first values of $n$. 
\begin{figure}[ht]
\begin{center}
\scalebox{0.4}{\includegraphics{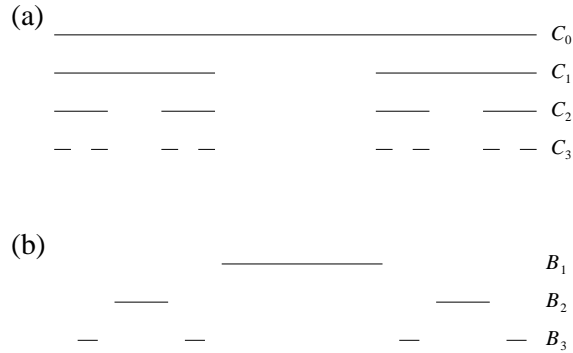}}
\end{center}
     \caption{
Schematic representation of $C_n$ and $B_n$ for the first values of $n$.
              }
\label{fig:CnBn}
\end{figure}
%

\subsection{The potential}
\label{subsec:potential}
We propose a model for a fractal potential. In order to define it, let us consider a function $\phi_k : [0,1] \to \mathbb{R}$ defined as follows:
\begin{equation*}
\phi_{k}(x) := \left\{
 \begin{array}{cc}
   1 & \mbox{if } x \in  B_{k},\\
   0 & \mbox{if } x \not\in B_k.
 \end{array} \right.
\end{equation*}
Notice that the function $\phi_k(x)$ is nothing but the characteristic function of the set $B_k$.
Next, let the function $\psi_n : [0,1] \to \mathbb{R}$ be given by
\begin{equation}
\label{eq:psi_n}
\psi_n (x) := - \sum_{k=1}^{n} A_k \phi_{k}(x), 
\end{equation}
where $\{A_k : k\in \mathbb{N}\}$ is a monotonically increasing sequence such that $A_k >0$ for every $k$ and $\lim_{k\to \infty} A_k = +\infty$. 
The fractal potential $\psi : [0,1] \to \mathbb{R}$ is then defined as the limit of the sequence of functions $\psi_n$ as $n\to \infty$,
\[
\psi (x) := \lim_{n\to \infty}\psi_{n}(x) = - \sum_{k=1}^{\infty} A_k \phi_{k}(x). 
\]
Notice that the sequence of finite-step potentials depends on an increasing sequence of real numbers which is not bounded, by definition. In figure~\ref{fig:potential} we can appreciate a schematic representation of the potential function $\psi_n$ for a finite value of $n$. Then, the first step is to determine whether the limit function $\psi (x)$ exists in some sense. This is actually the matter of our first result, the existence of the limit with respect to the $L^{p}$-norm.

\begin{proposition}\label{prop:conver-potential}

Consider the finite-step potential $\psi_n$ defined in equation~(\ref{eq:psi_n}).
If the sequence of constants $\{ A_k : k\in \mathbb{N} \}$ is such that $A_k = o\big(\left(\frac{3}{2}\right)^{k/p}\big)$, for every $p\geq1$, then the sequence $\{\psi_n : n\in\mathbb{N}\}$ converges in the $L^{p}$-norm.
\end{proposition}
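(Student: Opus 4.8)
The plan is to show that the partial sums $(\psi_n)_{n\in\mathbb{N}}$ form a Cauchy sequence in $L^p([0,1])$ and then to invoke completeness of $L^p$. The whole argument hinges on one structural feature of the construction: the removed blocks $B_k$ are pairwise disjoint. Indeed $B_k\subseteq C_{k-1}$ while $B_k\cap C_k=\emptyset$, and for $k'>k$ one has $B_{k'}\subseteq C_{k'-1}\subseteq C_k$, whence $B_{k'}\cap B_k=\emptyset$. Consequently, for every $x\in[0,1]$ at most one of $\phi_1(x),\dots,\phi_n(x)$ is nonzero, so the truncated sums defining $\psi_n$ never overlap. The second ingredient is the Lebesgue measure of $B_k$: it is a union of $2^{k-1}$ intervals of length $3^{-k}$, so $|B_k| = 2^{k-1}3^{-k} = \frac12 (2/3)^k$.

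First I would exploit the disjointness of supports to linearise the $L^p$-norm of the increments. For $n>m$ we have $\psi_n-\psi_m=-\sum_{k=m+1}^{n}A_k\phi_k$, and since the $\phi_k$ have disjoint supports and take only the values $0$ and $1$, the pointwise identity $|\psi_n(x)-\psi_m(x)|^p=\sum_{k=m+1}^{n}A_k^p\,\phi_k(x)$ holds for every $x$. Integrating over $[0,1]$ yields
\begin{equation*}
\|\psi_n-\psi_m\|_p^p \;=\; \sum_{k=m+1}^{n} A_k^p\,|B_k| \;=\; \frac12\sum_{k=m+1}^{n} A_k^p \left(\frac23\right)^{k}.
\end{equation*}
Hence $(\psi_n)$ is Cauchy in $L^p$ exactly when the tails of the series $\sum_{k\ge1}A_k^p(2/3)^k$ vanish, i.e.\ when this series converges; in that case $(\psi_n)$ has an $L^p$-limit, and since a subsequence of an $L^p$-convergent sequence converges almost everywhere while $\psi_n\to\psi$ pointwise at every $x\in[0,1]$, this limit must be $\psi$.

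It remains to deduce convergence of $\sum_{k\ge1}A_k^p(2/3)^k$ from the growth hypothesis, and this is the one point that needs care. The bare bound $A_k=o\big((3/2)^{k/p}\big)$ only gives $A_k^p(2/3)^k=o(1)$, which is not summable; this is precisely why the hypothesis is imposed for \emph{every} $p\ge1$. Fixing the target exponent $p$, I would instead use the hypothesis with the larger value $q=2p\ge1$, so that $A_k=o\big((3/2)^{k/(2p)}\big)$, hence $A_k^p=o\big((3/2)^{k/2}\big)$ and therefore
\begin{equation*}
A_k^p\left(\frac23\right)^{k} \;=\; o\!\left(\left(\frac32\right)^{k/2}\left(\frac23\right)^{k}\right) \;=\; o\!\left(\big(\sqrt{2/3}\,\big)^{k}\right).
\end{equation*}
Since $\sqrt{2/3}<1$, the series is eventually dominated by a convergent geometric series, so it converges and its tails tend to $0$; together with the previous display this shows $(\psi_n)$ is Cauchy in $L^p$, completing the argument.

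The main (indeed essentially the only) obstacle is this last summability step, together with the observation that the quantifier ``for every $p\ge1$'' in the hypothesis is exactly what makes it work. Everything else --- the disjointness of the $B_k$, the elementary measure computation, and the linearisation of the $L^p$-norm --- is routine bookkeeping. Alternatively one can bypass the Cauchy argument: since $\psi$ equals $-A_k$ on $B_k$ and $0$ off $\bigcup_{k}B_k$, one has $\|\psi\|_p^p=\frac12\sum_{k}A_k^p(2/3)^k$, and the monotone convergence theorem then gives $\psi_n\to\psi$ in $L^p$ as soon as this sum is finite.
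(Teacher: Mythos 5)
Your proof is correct and follows the same skeleton as the paper's: pairwise disjointness of the $B_k$, the resulting pointwise identity $|\psi_n-\psi_m|^p=\sum_{k=m+1}^{n}A_k^p\phi_k$, and $\ell(B_k)=\frac{1}{2}(2/3)^k$, which together give $\|\psi_n-\psi_m\|_p^p=\frac{1}{2}\sum_{k=m+1}^{n}A_k^p(2/3)^k$. The one place you genuinely depart from the paper is the final summability step, and your version is the more careful one. The paper concludes directly that $A_i=o\big((3/2)^{i/p}\big)$ forces the tails $\sum_{i=r+1}^{s}A_i^p\ell(B_i)$ to vanish; as you correctly point out, the hypothesis at the single exponent $p$ only yields $A_i^p(2/3)^i=o(1)$, which does not imply summability (for instance $A_i=(3/2)^i/i$ satisfies the $p=1$ bound while $\sum_i A_i\ell(B_i)$ diverges like the harmonic series). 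Your move of invoking the hypothesis at the exponent $2p$ to extract genuine geometric decay $A_k^p(2/3)^k=o\big((\sqrt{2/3})^k\big)$ is exactly what the blanket quantifier ``for every $p\geq 1$'' is there for, and it closes this gap; note that in the paper's actual application ($A_k=\alpha k$) the series converges anyway, so the result is unaffected. Your monotone-convergence alternative at the end is also valid and arguably cleaner than the Cauchy argument.
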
  
The proof of this proposition can be found in section~\ref{proof-prop:potential}. 
\begin{figure}[ht]
\begin{center}
\scalebox{0.35}{\includegraphics{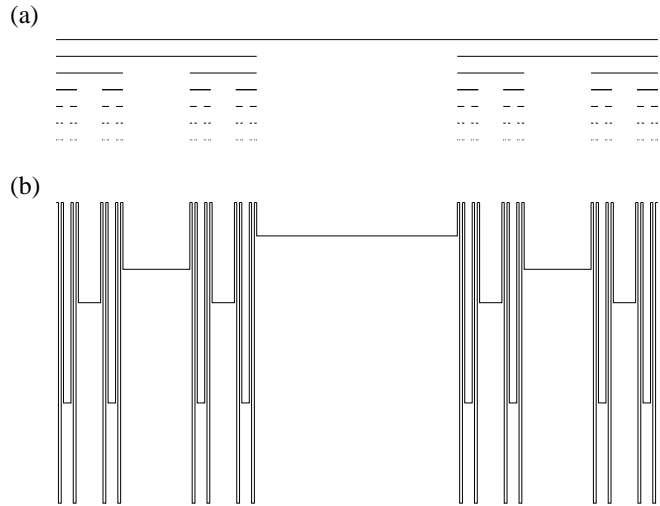}}
\end{center}
     \caption{
Schematic representation $\psi_n$. (a) The Cantor set construction. (b) The potential for a finite $n$ (actually for $n=5$). Notice that the value of the potencial on $B_k$ ($1\leq k \leq n$) is constant and that the value of the potential on $C_n$ is zero.
              }
\label{fig:potential}
\end{figure}
%

Notice that the above defined potential is such that the set of points minimizing $\psi$ is empty. This is because the points minimizing the $n$th potential, $\psi_n$, is just $B_n$ by construction (this is because $\psi_n(x) = -A_n$ if $x\in B_n$, also note that $\psi_n (x) = 0 $ for all $x\in C_n$). This means that the $n$th groundstate is the set $B_n$, and we can see that $\lim_{n\to\infty }B_n = \emptyset $. However, we can redefine the potential in order to have a non empty limiting groundstate. This is done by requiring that the value of the potential in $ C_n$ be the lowest. We denote by $\tilde \psi$ the potential
\begin{equation*}
\tilde \psi_n (x) := - \sum_{k=1}^{n} A_k \phi_{k}(x) - A_{n+1} \chi_{C_{n}}.
\end{equation*}
This slight modification makes the groundstate for $\tilde \psi_n$ to be $C_n$. Then, it is clear that the limiting groundstate becomes the middle-third Cantor set by construction. It is easy to see that $\tilde \psi$ is well defined and that all the theorems we state below also hold for $\tilde \psi$.

\subsection{Gibbs-Boltzmann distribution}
\label{subsec:Gibbs}
Here we introduce the main quantity of this paper, which is the distribution associated to the potentials defined in the previous section. We consider our space $[0,1]$ and with the collection of intervals as the measurable sets. Given a real positive number $\beta$ as the inverse temperature of the system, and a natural $n$, let us define $\mu_{\beta,n}$, the Gibbs-Boltzmann distribution associated to the finite-step potential $\psi_{n}$ of a measurable set $A\subset[0,1]$, as follows:
\begin{equation}
\label{eq:Gibbs}
\mu_{\beta,n }(A) = \int_{A} \rho_{\beta,n}(x) dx,
\end{equation}
where $\rho_{\beta, n}$ is the density of $\mu_{\beta,n}$ defined as the Boltzmann factor,
\[
\rho_{\beta,n}(x) := \frac{e^{-\beta \psi_{n}(x)}}{Z_{\beta,n}}.
\]
And $Z_{\beta,n}$ is the partition function given by
\[
Z_{\beta,n}:= \int_{[0,1]} e^{-\beta\psi_{n}(x)}dx.
\]
Our main result establishes that this distribution has a support that ``collapses'' into the middle-third Cantor set at a finite temperature. This means that at a finite temperature the (non-interacting) particles reach the groundstate before zero temperature.

\subsection{Weak distance and the Kullback-Leibler divergence}
\label{subsec:distance}

We are interested in the convergence of a sequence of probability distributions  $\mu_{\beta, n}$  to certain limiting measure. For this purpose we introduce the notions of two different forms in which the measures may converge. First, on the one hand, we consider the weak distance between two measures $\mu$ and $\nu$ defined on the same probability space,
\[
d(\mu,\nu) := \sup  \left\{\bigg|  \int f d\mu - \int f d\nu \bigg|  : f\in C^0  \right\}. 
\]
On the other hand, one has a stronger sense of convergence, called total variation. Given two probability distributions $\mu$ and $\nu$ on a sigma-algebra $\mathcal{F}$, the total variation distance is equal to
\[
||\mu -\nu||_{TV} : = \sup_{A\in\mathcal{F}}|\mu(A)-\nu(A)|.
\]
And we also have the Kulback-Leibler divergence for two probability distributions $\mu$ and $\nu$, which is given by
\[
D_{KL}(\mu||\nu) = \int\log\frac{d\mu}{d\nu}, 
\]
whenever $\mu<<\nu$ and infinite otherwise. It is important to mention that $D_{KL}(\mu||\nu)$  is not a distance on the space of probability distributions, yet it is related to the total variation distance by the Pinsker's inequality given by,
\[
\frac{1}{2}||\mu-\nu||_{TV}^{2}\leq D_{KL}(\mu||\nu).
\]
The above inequality states that convergence in Kullback-Leibler sense implies convergence in total variation (see for instance Ref.~\cite{Harremoes2009}).

\section{Results}
\label{sec:main}

Our results state that the Gibbs-Boltzmann measure for the fractal potential $\psi$ is well defined at all temperatures, in general, but there exists a critical inverse temperature $\beta_{c}$ for which the limiting measure and the convergence properties of the sequence $\mu_{\beta,n}$ change substantially. That is what we refer as freezing phase transition. The following proposition claims the existence of the limiting measure.

\begin{proposition}
\label{prop:existence}
Let $\mu_{\beta, n}$ be the Gibbs-Boltzmann distribution associated to the finite-step potential $\psi_{n}$ at the inverse temperature $\beta$ given by equation (\ref{eq:Gibbs}). Consider an arbitrary but fixed real number $\alpha>0$, $k\in\mathbb{N}$ and let $A_{k} :=\alpha k$. Then the sequence $\{ \mu_{\beta,n} : n\in\mathbb{N}\}$ converges to a unique limit measure $\mu_\beta$ for all $\beta > 0$. 
\end{proposition}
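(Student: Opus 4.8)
The plan is to reduce everything to an explicit evaluation of $\mu_{\beta,n}$ on the intervals that make up the Cantor construction, and then to upgrade interval-wise convergence to weak convergence by a uniform-continuity argument. First I would put the partition function and the Gibbs weights in closed form. Since $B_1,\dots,B_n$ and $C_n$ partition $[0,1]$, with $|B_k|=2^{k-1}3^{-k}$, $|C_n|=(2/3)^n$, and $\psi_n\equiv-\alpha k$ on $B_k$ while $\psi_n\equiv 0$ on $C_n$, one gets
\[
Z_{\beta,n}=\sum_{k=1}^{n}2^{k-1}3^{-k}e^{\alpha\beta k}+\Big(\tfrac23\Big)^n=\tfrac12\sum_{k=1}^{n}r^{k}+\Big(\tfrac23\Big)^n,\qquad r:=\tfrac23\,e^{\alpha\beta},
\]
which already exhibits the natural threshold $r=1$, i.e. $\beta_c:=\alpha^{-1}\log(3/2)$. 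More generally, for one of the $2^m$ intervals $I$ of $C_m$ (each of length $3^{-m}$) and $n\ge m$, the same bookkeeping applied to the self-similar copy of the construction inside $I$ gives
\[
\mu_{\beta,n}(I)=2^{-m}\,\frac{\,r^{m}\,\tfrac12\sum_{j=1}^{n-m}r^{j}+(2/3)^{n}\,}{\tfrac12\sum_{k=1}^{n}r^{k}+(2/3)^{n}},
\]
while for a single interval $J$ of the gap $B_k$ (length $3^{-k}$, $n\ge k$) one has $\mu_{\beta,n}(J)=3^{-k}e^{\alpha\beta k}/Z_{\beta,n}=(r/2)^k/Z_{\beta,n}$; moreover $\mu_{\beta,n}$ is a constant multiple of Lebesgue measure on each such $J$ and on each component of $C_n$, because $\rho_{\beta,n}$ is constant there.

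Second, I would let $n\to\infty$ in these expressions. Evaluating the geometric sums in the three regimes $r<1$, $r=1$, $r>1$ shows that $\mu_{\beta,n}(I)$ converges for every cylinder interval $I$, to $(r/2)^m$ when $r<1$ and to $2^{-m}$ when $r\ge1$; similarly $\mu_{\beta,n}(J)$ converges for every gap interval $J$ (to $(r/2)^k/Z_\beta$ with $Z_\beta=r/(2(1-r))$ when $r<1$, and to $0$ when $r\ge1$, since then $Z_{\beta,n}\to\infty$), and hence, by the uniformity just noted, so does $\mu_{\beta,n}$ of any subinterval of any $B_k$. For $\beta<\beta_c$ this step can be packaged more cleanly: $\rho_{\beta,n}\to\rho_\beta:=e^{-\beta\psi}/Z_\beta$ pointwise with $\int\rho_{\beta,n}=\int\rho_\beta=1$, so Scheff\'e's lemma gives $\|\mu_{\beta,n}-\mu_\beta\|_{TV}\to0$ directly; but the interval-by-interval route is needed to cover $\beta\ge\beta_c$ as well.

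Third, I would upgrade to weak convergence, uniformly in $\beta$. Given $f\in C^{0}([0,1])$ and $\varepsilon>0$, pick $m$ with $3^{-m}$ below the scale on which $f$ oscillates by $\varepsilon$, partition $[0,1]$ into the $2^m$ cylinders of $C_m$ together with the gap intervals $B_1,\dots,B_m$, and subdivide each gap interval into pieces of length $\le3^{-m}$. Replacing $f$ by the step function $\tilde f$ that equals a sampled value of $f$ on each piece of this partition gives $\|f-\tilde f\|_\infty\le\varepsilon$, while $\int\tilde f\,d\mu_{\beta,n}$ is a finite linear combination of the quantities shown to converge above; hence $\int f\,d\mu_{\beta,n}$ is Cauchy. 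Setting $\Lambda(f):=\lim_n\int f\,d\mu_{\beta,n}$, the functional $\Lambda$ is linear, positive and normalized, so by the Riesz representation theorem $\Lambda(f)=\int f\,d\mu_\beta$ for a unique Borel probability measure $\mu_\beta$ on $[0,1]$; thus $\mu_{\beta,n}$ converges weakly to $\mu_\beta$, and the limit is unique. Tracking the limits identifies $\mu_\beta$ explicitly — absolutely continuous with density $\propto e^{-\beta\psi}$ when $\beta<\beta_c$, and the Cantor distribution $\mu_\beta(I)=2^{-m}$ when $\beta\ge\beta_c$ — which is exactly what the subsequent theorems use.

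The main obstacle is the low-temperature regime $\beta\ge\beta_c$: there $Z_{\beta,n}\to\infty$, the densities $\rho_{\beta,n}$ tend to $0$ pointwise, and there is no convergence in total variation, so one cannot argue at the level of densities and must instead track how the Gibbs mass redistributes across the ever-finer Cantor intervals. Concretely, the two delicate points are (i) getting the closed form for $\mu_{\beta,n}(I)$ right, which requires using the self-similarity of the cylinder $I$ to sum the contributions of all the gaps $B_{m+1},B_{m+2},\dots$ nested inside it, and (ii) controlling, uniformly in $n$, the total mass carried by the increasingly numerous but individually tiny gap intervals, so that the step-function approximation in the last step is legitimate.
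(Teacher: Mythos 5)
Your proposal is correct, and it takes a genuinely different route from the paper. The paper proves Proposition~\ref{prop:existence} by deferring to Theorems~\ref{theorem:beta<beta_c} and~\ref{theorem:Cantor-dist}: for $\beta<\beta_c$ it establishes convergence of the densities in the Kullback--Leibler sense (hence in total variation via Pinsker's inequality), and for $\beta>\beta_c$ it shows $\int f\,d\mu_{\beta,n}$ is Cauchy for Lipschitz $f$ by means of four technical lemmas (\ref{lemma:mass-concentration}--\ref{lemma:residual}) controlling the mass on the gaps $B_k$ and on $C_n$. You instead compute $\mu_{\beta,n}$ in closed form on every cylinder of $C_m$ and every gap interval, using the self-similarity of the construction (your formula for $\mu_{\beta,n}(I)$ checks out: the numerator is $2^{-m}\bigl(r^m\tfrac12\sum_{j=1}^{n-m}r^j+(2/3)^n\bigr)$ with $r=\tfrac23 e^{\alpha\beta}$), pass to the limit in each of the three regimes $r<1$, $r=1$, $r>1$, and then upgrade to weak convergence by a step-function approximation plus the Riesz representation theorem; the sup-norm bound $\lvert\int(f-\tilde f)\,d\mu_{\beta,n}\rvert\le\varepsilon$ already disposes of your worry (ii), since $\mu_{\beta,n}$ is a probability measure. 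Two things your route buys: it is more elementary and explicit (no Kullback--Leibler machinery, no mean-value/Lipschitz estimates --- Scheff\'e's lemma suffices for the total-variation statement at high temperature), and, more substantively, it covers the critical case $\beta=\beta_c$, which the paper's two theorems, stated for $\beta<\beta_c$ and $\beta>\beta_c$ respectively, leave unaddressed even though the proposition asserts convergence for all $\beta>0$. What the paper's route buys in exchange is a quantitative rate (through the explicit $N_i(\epsilon')$) and an argument at the level of test functions that does not rely on the potential being exactly piecewise constant on a self-similar family of intervals.
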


\begin{proof}
The proof of this proposition is actually contained in theorems \ref{theorem:beta<beta_c} and \ref{theorem:Cantor-dist} below.
\end{proof}

Before we continue with  our results, let us make the following remark on the partition function and the existence of the critical inverse temperature.

\begin{remark}
\label{prop:Z}
The existence of the $\beta_{c}$ in our system is a consequence of the role that fractality plays on the different sets of the space. This is why it is related with the fractal dimension of the Cantor set. In order to see it, let us compute the partition function, take any $\beta$ and $n\in\mathbb{N}$,
\begin{eqnarray*}
Z_{\beta,n} &=& \int_{[0,1]} e^{-\beta \psi_n(x)} dx = \sum_{k=1}^n \int_{B_k} e^{\beta \alpha k} dx + \int_{C_n} dx
\\
&=&   \sum_{k=1}^n \ell ( B_k ) e^{\beta \alpha k} + \ell(C_n).
\end{eqnarray*}
These equalities hold because of the definition of $\psi_{n}$, and where, $\ell(A)$ stands for the length of the set $A\subset[0,1]$. Also, observe that we can always take that partition of the unit interval, say, $(\bigcup_{k}B_{k})\cup C_{n}$, which is induced by the chosen $n$. Now observe that $\ell(C_{n}) = \big(\frac{2}{3}\big)^{n}$ and $\ell(B_{k}) = \frac{2^{k-1}}{3^{k}}$. So let us introduce the quantity, 
\[
h = -\log\Big(\frac{2}{3}\Big),
\]
then, one has that:
\begin{equation}
\label{eq:Partition}
Z_{\beta,n} = \frac{1}{2} \sum_{k=1}^n e^{(\beta \alpha -h) k} + e^{-hn}.
\end{equation}
The last equation  makes clear that for $\beta\alpha - h\leq 0$ the sequence $Z_{\beta,n}$ converges and on the contrary, for $\beta\alpha - h> 0$ it diverges. So we define our critical inverse temperature $\beta_{c} = \frac{h}{\alpha} =\frac{\log(2/3)}{\alpha}$.
\end{remark}

\subsection{Case $\beta\leq\beta_{c}$.}

Here we give our result for the case when the inverse temperature is lower that the critical one. 

\begin{theorem}
\label{theorem:beta<beta_c}
For $\beta < \beta_{c}$ the sequence of measures $\{ \mu_{\beta,n} : n\in\mathbb{N}\}$ converges in the total variation distance to a limit measure which is absolutely continuous with respect to Lebesgue. Moreover, its density $\rho_{\beta} := \lim_{n\to \infty} \rho_{\beta,n}$, is given as follows:
\[
\rho_{\beta}(x) = \frac{e^{\psi(x)}}{Z_{\beta}},
\]
where $\psi(x)$ is the fractal potential assured by Proposition~\ref{prop:conver-potential} and $Z_{\beta}$ is the partition function given by,
\[
Z_{\beta} = \frac{1}{2}\frac{1}{1-e^{\beta \alpha-h}}.
\]
\end{theorem}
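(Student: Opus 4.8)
The plan is to identify the limit measure explicitly, prove convergence of the unnormalized Boltzmann weights by the monotone convergence theorem, and then upgrade this to total variation convergence of the normalized measures.

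\textbf{Step 1 (the candidate limit).} First I would note that, the sets $B_k$ being pairwise disjoint, for each $x\in[0,1]$ at most one of the functions $\phi_k(x)$ is nonzero; hence the series defining $\psi(x)$ has at most one nonzero term and $\psi_n(x)\downarrow\psi(x)$ pointwise on all of $[0,1]$ (monotonicity is clear since $\psi_n=\psi_{n-1}-A_n\phi_n$ with $A_n>0$). Since $\psi_n\le 0$ we also have $e^{-\beta\psi_n}\ge 1$, so $e^{-\beta\psi_n}\uparrow e^{-\beta\psi}$ with all functions bounded below by $1$. The monotone convergence theorem then gives $Z_{\beta,n}=\int_{[0,1]}e^{-\beta\psi_n}\,dx \uparrow \int_{[0,1]}e^{-\beta\psi}\,dx=:Z_\beta$, and in particular $Z_{\beta,n}\ge 1$ for all $n$. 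This is the one place where the hypothesis $\beta<\beta_c$ enters: by the computation in Remark~\ref{prop:Z}, $Z_{\beta,n}=\frac{1}{2}\sum_{k=1}^n e^{(\beta\alpha-h)k}+e^{-hn}$ with $\beta\alpha-h<0$, so the limit $Z_\beta$ is finite and strictly positive, and summing the geometric series produces its closed form. Therefore $\rho_\beta:=e^{-\beta\psi}/Z_\beta$ is a bona fide probability density, $\mu_\beta:=\rho_\beta\,dx$ is absolutely continuous with respect to Lebesgue, and $\rho_\beta=\lim_n\rho_{\beta,n}$ pointwise.

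\textbf{Step 2 (total variation convergence).} For absolutely continuous measures on $[0,1]$ the total variation distance equals one half the $L^1$ distance of the densities, so it is enough to show $\|\rho_{\beta,n}-\rho_\beta\|_{L^1}\to 0$. I would write
\[
\rho_{\beta,n}-\rho_\beta=\frac{e^{-\beta\psi_n}-e^{-\beta\psi}}{Z_{\beta,n}}+e^{-\beta\psi}\left(\frac{1}{Z_{\beta,n}}-\frac{1}{Z_\beta}\right),
\]
take $L^1$ norms, and bound the first term by $Z_\beta-Z_{\beta,n}$ (using $Z_{\beta,n}\ge 1$) and the second by $\big|Z_{\beta,n}^{-1}-Z_\beta^{-1}\big|\,Z_\beta$; both tend to $0$ by Step~1. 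This proves $\mu_{\beta,n}\to\mu_\beta$ in total variation, hence uniqueness of the limit, and it also supplies the $\beta<\beta_c$ half of Proposition~\ref{prop:existence}. (Staying closer to the tools set up in Section~\ref{subsec:distance}, one could instead check that $D_{KL}(\mu_\beta\,\|\,\mu_{\beta,n})=\log(Z_{\beta,n}/Z_\beta)+\beta\int(\psi_n-\psi)\,d\mu_\beta$, where the integral is the tail $\frac{\alpha}{2Z_\beta}\sum_{k>n}k\,e^{(\beta\alpha-h)k}$ of a convergent series and thus vanishes as $n\to\infty$; Pinsker's inequality then gives the same conclusion.)

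\textbf{What I expect to be the main obstacle.} Honestly there is no deep difficulty left once Remark~\ref{prop:Z} is in hand: the whole transition mechanism is the sign dichotomy of $\beta\alpha-h$, and for $\beta<\beta_c$ both $\sum_k e^{(\beta\alpha-h)k}$ and (for the Kullback--Leibler route) $\sum_k k\,e^{(\beta\alpha-h)k}$ converge geometrically, which is exactly what keeps $Z_\beta$ finite and the limit measure absolutely continuous. The only points needing care are the bookkeeping in Step~1 — that $\psi_n\downarrow\psi$ everywhere and that $e^{-\beta\psi}\in L^1([0,1])$ precisely in this temperature range — and the elementary observation that the uniform lower bound $Z_{\beta,n}\ge 1$ is what lets one pass from convergence of the Boltzmann weights to convergence of the normalized densities.
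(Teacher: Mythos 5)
Your proof is correct, but it takes a genuinely different route from the paper's. The paper works with the Kullback--Leibler divergence: it computes $D(\rho_{\beta,n}\Vert\rho_\beta)=\log\left(Z_\beta/Z_{\beta,n}\right)+\beta\int_{[0,1]}(\psi-\psi_n)\rho_{\beta,n}\,dx$, bounds the integral by $Z_{\beta,n}^{-1}\Vert\psi-\psi_n\Vert_{L^1}$ (using that $\psi-\psi_n$ vanishes off $C_n$ while $e^{-\beta\psi_n}=1$ there), invokes Proposition~\ref{prop:conver-potential} to get $\Vert\psi-\psi_n\Vert_{L^1}\to 0$, and then applies Pinsker's inequality --- essentially your parenthetical alternative, with the arguments of $D$ in the other order. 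Your main argument is more elementary and self-contained: disjointness of the $B_k$ gives $\psi_n\downarrow\psi$ pointwise, hence $e^{-\beta\psi_n}\uparrow e^{-\beta\psi}$, monotone convergence gives $Z_{\beta,n}\uparrow Z_\beta<\infty$ precisely because $\beta\alpha-h<0$, and your two-term splitting of $\rho_{\beta,n}-\rho_\beta$ together with the uniform bound $Z_{\beta,n}\ge 1$ is a Scheff\'e-type argument yielding $L^1$ convergence of the densities, i.e.\ total variation convergence with the explicit rate $O(Z_\beta-Z_{\beta,n})$. What your route buys: it bypasses the Kullback--Leibler machinery and does not need Proposition~\ref{prop:conver-potential} at all (only $Z_\beta<\infty$ matters, not the growth condition $A_k=o((3/2)^k)$); what the paper's route buys is consistency with the convergence notions set up in Section~\ref{subsec:distance} and an information-theoretic statement that is slightly stronger than TV via Pinsker. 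Two cosmetic remarks, neither affecting your argument: the density in the theorem statement should read $e^{-\beta\psi(x)}/Z_\beta$ (as in the paper's own proof and in your write-up), and summing the geometric series in Remark~\ref{prop:Z} from $k=1$ gives $Z_\beta=\tfrac12\,e^{\beta\alpha-h}/(1-e^{\beta\alpha-h})$ rather than the displayed $\tfrac12/(1-e^{\beta\alpha-h})$, so you are right not to commit to the paper's closed form.
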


The proof of this theorem is given in section~\ref{proof:beta<beta_c}.

\subsection{Case $\beta>\beta_{c}$.}

Next, for the case where the inverse temperature is larger that the critical one, stated in the proposition~\ref{prop:Z}, one has the following results.

\begin{theorem}
\label{theorem:Cantor-dist}
For $\beta>\beta_{c}$ one has that the sequence $\{\mu_{\beta, n}\}$ converges to a limit measure $\mu_{*}$ in the weak distance. Moreover the limit measure corresponds to the Cantor distribution, which is a singular continuous distribution, such that
$
\mu_{*} (C) = 1.
$
\end{theorem}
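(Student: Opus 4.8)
The plan is to work directly with the explicit form of the density $\rho_{\beta,n}$ and to show that, in the regime $\beta > \beta_c$ (i.e.\ $\beta\alpha - h > 0$), almost all of the mass of $\mu_{\beta,n}$ concentrates on $C_n$ as $n\to\infty$, with the conditional distribution on $C_n$ being uniform. First I would revisit the partition-function computation from Remark~\ref{prop:Z}: from $Z_{\beta,n} = \tfrac12\sum_{k=1}^n e^{(\beta\alpha - h)k} + e^{-hn}$, the dominant term when $\beta\alpha - h>0$ is the last term of the geometric sum, so $Z_{\beta,n} \sim \tfrac12\,\frac{e^{(\beta\alpha-h)}}{e^{(\beta\alpha-h)}-1}\,e^{(\beta\alpha-h)n}$, which grows like $e^{(\beta\alpha-h)n}$, while the $e^{-hn}$ piece is negligible in comparison. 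Then I would estimate the mass that $\mu_{\beta,n}$ assigns to the ``bad'' set $B_1\cup\cdots\cup B_n = [0,1]\setminus C_n$: since $\rho_{\beta,n}=e^{\beta\alpha k}/Z_{\beta,n}$ on $B_k$, we get $\mu_{\beta,n}\big([0,1]\setminus C_n\big) = \tfrac{1}{2Z_{\beta,n}}\sum_{k=1}^{n} e^{(\beta\alpha - h)k}$, and dividing by the asymptotics of $Z_{\beta,n}$ shows this tends to $0$ geometrically fast (like $e^{-(\beta\alpha-h)\cdot 1}$ times a bounded factor... more precisely one checks the ratio $\to 0$). Hence $\mu_{\beta,n}(C_n)\to 1$. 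Moreover, on $C_n$ the density is constant ($\psi_n\equiv 0$ there), so the conditional measure $\mu_{\beta,n}(\,\cdot\mid C_n)$ is precisely normalized Lebesgue measure restricted to $C_n$, which is the standard $n$th-level approximation of the Cantor distribution.

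Next I would pass to the limit in the weak topology. Let $\mu_*$ denote the Cantor distribution (the unique Borel probability measure with $\mu_*(C)=1$ that assigns mass $2^{-n}$ to each of the $2^n$ level-$n$ intervals of $C_n$; equivalently the distribution of $\sum_{i\ge1} 2X_i 3^{-i}$ with $X_i$ i.i.d.\ Bernoulli$(1/2)$). To show $\mu_{\beta,n}\to\mu_*$ weakly, I would take $f\in C^0([0,1])$, split $\int f\,d\mu_{\beta,n} = \int_{C_n} f\,d\mu_{\beta,n} + \int_{[0,1]\setminus C_n} f\,d\mu_{\beta,n}$, bound the second term by $\|f\|_\infty\,\mu_{\beta,n}([0,1]\setminus C_n)\to 0$, and for the first term use that $\mu_{\beta,n}(\,\cdot\mid C_n)$ is the uniform-on-$C_n$ measure $\nu_n$, for which $\int f\,d\nu_n\to\int f\,d\mu_*$ is the classical statement that the level-$n$ uniform approximations converge weakly to the Cantor measure (uniform continuity of $f$ plus the fact that each level-$n$ interval has length $3^{-n}\to0$ and carries $\nu_n$-mass exactly $2^{-n}=\mu_*$-mass of the same interval). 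Combining, $\int f\,d\mu_{\beta,n}\to\int f\,d\mu_*$ for every continuous $f$, i.e.\ convergence in the weak distance $d(\cdot,\cdot)$ of Section~\ref{subsec:distance}. Finally $\mu_*(C)=1$ and singular continuity of the Cantor distribution are standard (it has no atoms since each level-$n$ interval has mass $2^{-n}\to0$, and it is singular since it is carried by the Lebesgue-null set $C$).

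The main obstacle, and the one I would be most careful about, is the weak-convergence step for the restricted measures $\nu_n \to \mu_*$: one must argue that the mismatch between ``mass spread uniformly over a level-$n$ interval of length $3^{-n}$'' and ``the corresponding chunk of Cantor mass'' vanishes. The clean way is to note that for any $f\in C^0$ and any $\varepsilon>0$, choosing $n$ large enough that $3^{-n}$ is smaller than the modulus-of-continuity scale for $\varepsilon$, the value of $f$ varies by less than $\varepsilon$ on each level-$n$ interval; since $\nu_n$ and $\mu_*$ assign the \emph{same} mass $2^{-n}$ to each such interval, $|\int f\,d\nu_n - \int f\,d\mu_*|\le \varepsilon$. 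This sidesteps any need for an explicit coupling. A secondary point worth stating explicitly is that the limit is independent of $\beta$ throughout the phase $\beta>\beta_c$ (the $\beta$-dependence lives entirely in the vanishing ``bad'' mass), which is exactly the ``freezing'' assertion; and one should remark that only weak (not total-variation) convergence can hold here, since each $\mu_{\beta,n}$ is absolutely continuous while $\mu_*$ is singular, so $\|\mu_{\beta,n}-\mu_*\|_{TV}=1$ for all $n$.
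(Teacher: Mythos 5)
Your central quantitative claim is false, and the argument collapses at that point. You assert that $\mu_{\beta,n}\big([0,1]\setminus C_n\big) = \frac{1}{2Z_{\beta,n}}\sum_{k=1}^{n}e^{(\beta\alpha-h)k}$ tends to $0$, hence $\mu_{\beta,n}(C_n)\to 1$. But that numerator is exactly $Z_{\beta,n}-e^{-hn}$, so the ratio equals $1-e^{-hn}/Z_{\beta,n}\to 1$; equivalently $\mu_{\beta,n}(C_n)=e^{-hn}/Z_{\beta,n}\to 0$, which is precisely Lemma~\ref{lemma:residual} of the paper. For $\beta>\beta_c$ the measure $\mu_{\beta,n}$ therefore puts asymptotically \emph{no} mass on $C_n$: the mass concentrates on the most recently removed gaps, with $\mu_{\beta,n}(B_{n-l})$ asymptotically proportional to $e^{-(\beta\alpha-h)l}$, and these sets lie near but \emph{outside} $C_n$. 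Consequently your identification of the approximating measure as ``normalized Lebesgue on $C_n$'' is unavailable, and the downstream interval-by-interval comparison of $\nu_n$ with $\mu_*$ has no object to apply to.

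The statement that is true, and that the paper proves as Lemma~\ref{lemma:mass-concentration}, is that for each \emph{fixed} $k$ one has $\mu_{\beta,n}(C_k)\to 1$ as $n\to\infty$; the order of the two indices matters decisively here. A repaired version of your strategy would fix $k$, use $\mu_{\beta,n}(C_k)\to 1$ together with the exact self-similarity of the construction (which splits the mass of $C_k$ equally among its $2^k$ constituent intervals of length $3^{-k}$), apply your modulus-of-continuity comparison at level $k$ against the Cantor measure's weight $2^{-k}$ per interval, and then let $k\to\infty$ after $n\to\infty$. That would indeed be shorter than the paper's route, which instead proves that $\{\mu_{\beta,n}(f)\}$ is Cauchy via the decomposition over $\{B_k\}\cup\{C_n\}$ and the technical Lemmas~\ref{lemma:mass-concentration}, \ref{lemma:averages-f}, \ref{lemma:mass-concentration-bar} and \ref{lemma:residual}. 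But as written your proof establishes the wrong limit at its load-bearing step. Your closing remarks (the limit is independent of $\beta$ in the low-temperature phase; total-variation convergence is impossible since each $\mu_{\beta,n}$ is absolutely continuous while $\mu_*$ is singular) are correct but do not rescue the main argument.
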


The proof of this theorem is given in section~\ref{proofTheoCantorDist}. Next, we give other results related to the cumulative distribution function of the limit measure $\mu_{*}$.

\begin{proposition}
\label{prop:continuity}
For $\beta > \beta_c$ the cumulative distribution function $F(x)$ associated to the measure 
$\mu_*$ is continuous. 
\end{proposition}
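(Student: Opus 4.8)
To prove Proposition~\ref{prop:continuity} I would reduce the continuity of $F$ to the absence of atoms of $\mu_*$. Since $F$ is monotone and, in the usual convention, right-continuous, its only possible discontinuities are jumps, and the jump of $F$ at a point $x$ equals $\mu_*(\{x\})$. Hence $F$ is continuous on $[0,1]$ precisely when $\mu_*(\{x\})=0$ for every $x$, and this is the statement I would establish.

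The case $x\notin C$ is immediate: by Theorem~\ref{theorem:Cantor-dist} we have $\mu_*(C)=1$, so $\mu_*(\{x\})\leq\mu_*([0,1]\setminus C)=0$. For $x\in C$ I would use the structure of $\mu_*$ furnished by Theorem~\ref{theorem:Cantor-dist}. For each $m$, the set $C_m$ is a disjoint union of $2^m$ closed intervals of length $3^{-m}$, and $\mu_*$ assigns mass exactly $2^{-m}$ to each of them. Any $x\in C$ belongs, for every $m$, to exactly one such level-$m$ interval $I_m(x)$, and $\{x\}=\bigcap_{m\in\mathbb{N}}I_m(x)$; therefore
\[
\mu_*(\{x\})\leq\mu_*\big(I_m(x)\big)=2^{-m}\qquad\mbox{for every }m\in\mathbb{N},
\]
and letting $m\to\infty$ gives $\mu_*(\{x\})=0$. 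Thus $\mu_*$ is non-atomic and $F$ is continuous on $[0,1]$, including at the endpoints.

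The only delicate point is the input $\mu_*(I_m(x))=2^{-m}$ (or, more economically, just $\mu_*(I_m(x))\to 0$). If one prefers not to invoke the identification of $\mu_*$ with the Cantor distribution, this can be obtained directly from the weak convergence $\mu_{\beta,n}\to\mu_*$ of Theorem~\ref{theorem:Cantor-dist}: a computation with $\rho_{\beta,n}$ parallel to the evaluation of $Z_{\beta,n}$ in Remark~\ref{prop:Z} shows that, for $\beta>\beta_c$, $\mu_{\beta,n}(I_m(x))\to 2^{-m}$, while the $\mu_{\beta,n}$-mass of a small open neighbourhood $\widetilde I_m(x)\supset I_m(x)$ chosen to stay inside the gaps adjacent to $I_m(x)$ (which belong to $B_k$ with $k\leq m$, hence carry bounded potential) exceeds $\mu_{\beta,n}(I_m(x))$ only by a term of order $1/Z_{\beta,n}\to 0$. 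The portmanteau inequality for open sets then yields $\mu_*(I_m(x))\leq\mu_*(\widetilde I_m(x))\leq 2^{-m}$, which is all that is needed. I expect this bookkeeping, rather than any conceptual issue, to be the main obstacle; the rest is a standard atom-counting argument.
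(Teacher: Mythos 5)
Your argument is correct, but it takes a genuinely different route from the paper. You reduce continuity of $F$ to the non-atomicity of $\mu_*$ (the jump of the monotone function $F$ at $x$ equals $\mu_*(\{x\})$) and then kill the atoms: trivially off the Cantor set since $\mu_*(C)=1$, and on $C$ by trapping $x$ in the nested level-$m$ intervals $I_m(x)$ with $\mu_*(I_m(x))\le 2^{-m}$. The paper instead works with the approximating cumulative distribution functions $F_{\beta,n}$: for $|x-y|\le 3^{-k}$ it bounds $|F_{\beta,n}(x)-F_{\beta,n}(y)|$ by the $\mu_{\beta,n}$-mass of the (at most two) level-$k$ pieces containing $x$ and $y$, and lets $n\to\infty$ to get the bound $2^{-k}$. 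The paper's route thus yields a quantitative modulus of continuity (essentially H\"older continuity with exponent $\log 2/\log 3$), which is more than the proposition asserts, whereas yours yields exactly continuity; on the other hand, yours is cleaner measure-theoretically and, in its second variant, more careful than the paper about what weak convergence actually licenses (you correctly note that for the closed set $I_m(x)$ one must pass through an open neighbourhood and use the portmanteau inequality, paying only an $O(1/Z_{\beta,n})$ price in the adjacent gaps, which vanishes by Remark~\ref{prop:Z}). One caveat: your first, shorter variant leans on the identification of $\mu_*$ with the Cantor distribution from Theorem~\ref{theorem:Cantor-dist}, under which the proposition is nearly immediate and arguably circular as a separate statement; the self-contained version, deriving $\mu_*(I_m(x))\le 2^{-m}$ from the limits $\mu_{\beta,n}(C_m)\to 1$ (Lemma~\ref{lemma:mass-concentration}) together with the equal distribution of mass among the $2^m$ congruent intervals of $C_m$, is the one that stands on its own, and it does.
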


\begin{proof} Take $x,y \in [0,1]$ such that $|x-y| \leq 3^{-k}$ for some fixed $k \in \mathbb{N}$. 
Then we have that i) both $x$ and $y$ are contained in some interval of $C_k$, ii)  both $x$ and $y$ 
are contained in some interval of $B_l$ for some $ 1 \leq l \leq k$, iii) one of the points $x,y$ belongs to
some interval of $C_k$ and the other belongs to some  interval of $B_l$ for some $ 1 \leq l \leq k$. 

For the case i) let us denote by $C_{k,j}$ the interval containing $x$ and $y$. The notice that  
\[
\vert F_{\beta,n}(x) -   F_{\beta,n}(y)  \vert =  \left\vert \int_{x}^y \rho_{\beta,n} (z) dz \right\vert \leq  \mu_{\beta,n} (C_{k,j}). 
\]
Since
\[
\lim_{n\to \infty} \mu_{\beta,n} (C_k) = 1,
\]
by Theorem~\ref{theorem:beta<beta_c}, we have that 
\[
\lim_{n\to \infty} \vert F_{\beta,n}(y) -   F_{\beta,n}(x)  \vert  \leq \lim_{n\to \infty} \mu_{\beta,n} (C_{k,j}) = 2^{-k}. 
\]

For the case ii) let us denote by $B_{l,i}$ the interval containing $x$ and $y$. Then we have that
\[
\vert F_{\beta,n}(x) -   F_{\beta,n}(y)  \vert =  \left\vert \int_{x}^y \rho_{\beta,n} (z) dz \right\vert  \leq \mu_{\beta,n} (B_{l,i}). 
\]
By Theorem~\ref{theorem:beta<beta_c} it is clear that
\[
\lim_{n\to\infty} \vert F_{\beta,n}(x) -   F_{\beta,n}(y)  \vert  \leq  \lim_{n\to\infty}   \mu_{\beta,n} (B_{l,i}) =0. 
\]
Finally for the case iii) it is easy to see that
\[
\vert F_{\beta,n}(x) -   F_{\beta,n}(y)  \vert =  \left\vert \int_{x}^y \rho_{\beta,n} (z) dz \right\vert  \leq  \mu_{\beta,n} (C_{k,j}) + \mu_{\beta,n} (B_{l,i}). 
\]
Again, by Theorem~\ref{theorem:beta<beta_c}, we have that 
\[
\lim_{n\to \infty} \vert F_{\beta,n}(y) -   F_{\beta,n}(x)  \vert  \leq \lim_{n\to \infty} \left[ \mu_{\beta,n} (C_{k,j}) + \mu_{\beta,n} (B_{l,i}) \right] =  2^{-k}. 
\]
This shows that the  cumulative distribution function $F(x)$ is continuous. 
\end{proof}

\begin{proposition}
\label{prop:F=0}
For $\beta > \beta_c$ the cumulative distribution function $F_{\beta,n}(x)$ associated to
the density $\rho_{\beta,n}$ is such that
\[
\lim_{n\to \infty} \frac{dF_{\beta,n}(x)}{dx} = 0,
\]
for almost every $x$ with respect to Lebesgue. 
\end{proposition}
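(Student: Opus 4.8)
The plan is to reduce the statement to a pointwise convergence fact about the Boltzmann density. By construction $F_{\beta,n}$ is the cumulative distribution of the absolutely continuous measure $\mu_{\beta,n}$, so by the fundamental theorem of calculus $dF_{\beta,n}(x)/dx = \rho_{\beta,n}(x) = e^{-\beta\psi_n(x)}/Z_{\beta,n}$ for every $x$ (the integrand being a step function, this is valid at all $x$ that are not endpoints of the defining intervals, a Lebesgue-null set). Hence it suffices to show $\rho_{\beta,n}(x)\to 0$ for Lebesgue-a.e.\ $x\in[0,1]$.

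The first step is the measure-theoretic bookkeeping. Since $\ell(C_n)=(2/3)^n\to 0$, the Cantor set $C$ has zero Lebesgue measure, and the set of endpoints of the intervals composing the various $C_m$ is countable, hence also null. Therefore, for a.e.\ $x\in[0,1]$ there is a \emph{unique} index $k=k(x)\in\mathbb{N}$ with $x\in B_{k}$, because the $B_k$ are pairwise disjoint and $[0,1]\setminus C=\bigcup_{k\ge 1}B_k$. For such an $x$ and every $n\ge k(x)$ one has $\phi_j(x)=0$ for $j\neq k(x)$ and $\phi_{k(x)}(x)=1$, so $\psi_n(x)=-A_{k(x)}=-\alpha\,k(x)$, a number that does not depend on $n$. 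Consequently, for all $n\ge k(x)$,
\[
\frac{dF_{\beta,n}(x)}{dx}=\rho_{\beta,n}(x)=\frac{e^{\beta\alpha k(x)}}{Z_{\beta,n}}.
\]

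The final step is to control the denominator. From the closed form in equation~(\ref{eq:Partition}), when $\beta>\beta_c=h/\alpha$ we have $\beta\alpha-h>0$, so the geometric sum $\tfrac12\sum_{k=1}^n e^{(\beta\alpha-h)k}$ diverges and hence $Z_{\beta,n}\to+\infty$. Since the numerator $e^{\beta\alpha k(x)}$ is fixed once $x$ is fixed, we conclude $\rho_{\beta,n}(x)\to 0$ as $n\to\infty$ for a.e.\ $x$, which is exactly the assertion. I do not expect any genuine obstacle here; the only point requiring a little care is that $k(x)$, while finite for a.e.\ fixed $x$, is unbounded as a function on $[0,1]$, so the convergence of $\rho_{\beta,n}$ to $0$ is pointwise a.e.\ but not uniform — which is consistent with the fact that the $\rho_{\beta,n}$ integrate to $1$ while mass escapes toward the Cantor set.
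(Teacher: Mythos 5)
Your proof is correct and follows essentially the same route as the paper's: reduce to showing $\rho_{\beta,n}(x)\to 0$ for a.e.\ $x$, note that a.e.\ $x$ lies in some $B_k$ with $k$ fixed so the numerator $e^{\beta\alpha k}$ is independent of $n$, and use the divergence of $Z_{\beta,n}$ for $\beta>\beta_c$. Your additional bookkeeping (uniqueness of $k(x)$, nullity of the Cantor set and of the endpoint set, the remark on non-uniformity) is a welcome tightening but not a different argument.
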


\begin{proof} The derivative of the cumulative distribution function $F_{\beta,n}(x)$ is the density $\rho_{\beta,n}$. This means that we only need to prove that $\rho_{\beta,n}(x) = 0$ for almost every $x$ with respect to Lebesgue. Thus, take $x\in B_k $ for fixed $k$ and notice that
\[
\rho_{\beta,n}(x) = \frac{e^{-\beta \psi_n(x)} }{Z_{\beta,n}} = \frac{e^{\beta \alpha k}}{ Z_{\beta,n}}.
\]
Now, by Remark~\ref{prop:Z} we have that $Z_{\beta,n} \to \infty$ as $n\to \infty$. Since we assumed that $k$ is finite it is clear that
\[
\rho_{\beta,n}(x) = \frac{e^{\beta \alpha k}}{ Z_{\beta,n}} \to 0 \qquad  \mbox{ as } \quad n\to \infty.
\]
Since $\bigcup_{k=1}^\infty B_k = [0,1]\setminus C $ we have that $\rho_{\beta,n}(x) = 0$ for almost every $x$ with respect to Lebesgue. This proves our statement. 
\end{proof}

\section{Concluding remarks}
\label{sec:concluding}

We have introduced a simple model which can be dealt with in a rigorous way. This model is a system of non-interacting particles which are influenced by a fractal potential. The system exhibits a phase transition, which is mainly characterized by the fact that the support of the Gibbs-Boltzmann measure ``collapses'' into the middle-third Cantor set. The latter essentially means that above  certain critical temperature, the  Gibbs-Boltzmann  probability measure is absolutely continuous with respect to Lebesgue and below the critical temperature it becomes singular continuous. Moreover, once the system has reached the critical temperature, the  Gibbs-Boltzmann measure does not change anymore by further decreasing the temperature. This last property means that the system ``freezes'' at a finite temperature, a phenomenon that was already shown to occur in spin systems~\cite{bruin2013renormalization}.

From the physical point of view, the occurrence of phase transition in our model is related to the occurrence of anomalous diffusion. For example, we can think of our model as a system consisting of non-interacting overdamped particles moving on a fractal periodic potential by extending our fractal potential model periodically to the whole real line. In this case, it is known that the exact diffusion coefficient is proportional to the inverse of the partition function~\cite{festa1978diffusion},
\[
D_{\mathrm{eff}}  \propto \frac{1}{Z_\beta Z_{-\beta} }.
\]
Since $Z_\beta$ diverges for $\beta > \beta_c$ it is clear that $D_{\mathrm{eff}}$ goes to zero, having as a consequence the occurrence of a normal to anomalous diffusion transition.  This fact suggest that the freezing transition is also accompanied by transition in which the particles does not spread linearly in time, a phenomenon known as \emph{slowing down}. The importance of our study is that we have shown that within the anomalous diffusion phase, the system does not present a density of particles because of the singular continuous measure. It would be interesting to see if some other systems, for which the occurrence of anomalous diffusion is reported (see for instance Refs.~\cite{sanders2006occurrence,metzler2014anomalous,armstead2003anomalous,salgado2013normal,salgado2016normal,hidalgo2017scarce}), exhibit a singular continuous measure. This, of course, would requiere a much more detailed further study.   

From the mathematical our system presents interesting properties. For example, we have shown that the convergence properties of the successive finite-step Gibbs-Boltzmann measures to the limit, substantially changes from one case to the other. That is, in one case the convergence is in total variation (or information theoretic sense) and below the critical temperature, it is only in the weak sense. Moreover, we have also shown that, despite the system has not a ground-state (the set of point minimizing the potential) our system still exhibits a freezing transition. 

And finally, we would like to point out the reference~\cite{vanEnter2007}, which in some sense is related to our work. In that paper, the authors propose a spin system with nearest-neighbor interaction. The potential is defined in a very special way, in a form of wells-in-wells, alternating one ferromagnetic and the other antiferromagnetic. They showed that there is no low-temperature limit of any sequence of Gibbs measures. It would be interesting to know if the phenomenon we observe in our system can also occur in spin systems.

\section{Proofs}
\label{sec:proofs}

\subsection{Proof of proposition~\ref{prop:conver-potential}}\label{proof-prop:potential}

By the definition of $\psi_{n}$, the sequence $\{\psi_{n}\}$ converges (probably to infinity) pointwise to the function $\psi$.
Here we go further by proving the convergence in $L^{p}$-norm. For exposition purposes, first we will show that the $L^{1}$-distance between the finite step functions $\psi_{r}$ and $\psi_{s}$ goes to zero if one takes $r$ and $s$ large enough. 
Let $r<s$, for every $x\in[0,1]$ we have that,
\[
| \psi_{s}(x) - \psi_{r}(x)| = \sum_{k=r+1}^{s} A_{k}\phi_{k}(x).
\]
Observe that, for every $n\in\mathbb{N}$, one has the partition for the unit interval given as follows, 
\begin{equation}
\label{eq:partition}
[0,1] = \Big( \bigcup_{i=1}^{n} B_{i} \Big)\cup C_{n}.
\end{equation}
Consider the $L^{1}$-distance between $\psi_{r}$ and $\psi_{s}$, and using the partition (\ref{eq:partition}), one has
\begin{eqnarray}\label{eq:int-partition}
\hspace{-1cm}
\nonumber  || \psi_{s} - \psi_{r}||_{L^{1}} &= & \sum_{i=1}^{n}\int_{B_{i}}|\psi_{s}(x) - \psi_{r}(x)|dx + \int_{C_{n}}| \psi_{s}(x) - \psi_{r}(x)| dx\\
& = & \sum_{i=1}^{n}\int_{B_{i}}\sum_{k=r+1}^{s} A_{k}\phi_{k}(x) dx + \int_{C_{n}}\sum_{k=r+1}^{s} A_{k}\phi_{k}(x) dx.
\end{eqnarray}
\noindent
Recall that $\phi_{k}$ is the characteristic function of the set $B_{k}$, and observe that for every $n$, the first term in equation (\ref{eq:int-partition}), can be written as follows:
\begin{eqnarray*}
\sum_{i=1}^{n}\int_{B_{i}}\sum_{k=r+1}^{s} A_{k}\phi_{k}(x) dx & =& \sum_{i=1}^{n}\sum_{k=r+1}^{s}\int_{[0,1]} A_{k}\cdot\chi_{B_{i}\cap B_{k}}(x) dx.
\end{eqnarray*}
A direct but useful observation is that, whenever $i\neq k$,
\begin{equation}\label{observation1}
B_{i} \cap B_{k} = \emptyset.
\end{equation}
\noindent
Next, the second term in eq.~(\ref{eq:int-partition}) can be written as follows, 
\begin{eqnarray*}
\int_{C_{n}}\sum_{k=r+1}^{s} A_{k}\phi_{k}(x) dx 
&=& \sum_{k=r+1}^{s}\int_{[0,1]}A_{k}\cdot\chi_{C_{n}\cap B_{k}}(x)dx.
\end{eqnarray*}
\noindent
Now we proceed by proving that for every $n$, that is no matter the thickness of the partition one has the same estimate for the $L^{1}$-norm. For clarity, we do it separately for each of the two summands in eq. (\ref{eq:int-partition}) and we consider three cases, first, for $n\leq r$. Observe, that $C_{n}\cap B_{i} = \emptyset$ for all $i\leq n$, so one has that
\[
\sum_{i=1}^{n}\int_{[0,1]}\sum_{k=r+1}^{s}A_{k}\cdot \chi_{B_{i}\cap B_{k}}(x)dx = 0,
\]
using also observation (\ref{observation1}). For $r<n\leq s$ one has the following,
\[
\sum_{i=1}^{n}\int_{[0,1]}\sum_{k=r+1}^{s} A_{k}\cdot \chi_{B_{i}\cap B_{k}}(x)dx = \sum_{i=r+1}^{n}A_{i}\cdot\ell(B_{i}),
\]
where, $\ell$ stands for the usual length. Finally for $n>s$, one has that
\[
\sum_{i=1}^{n}\int_{[0,1]}\sum_{k=r+1}^{s} A_{k} \chi_{B_{i}\cap B_{k}}(x)dx = \sum_{i=r+1}^{s}A_{i}\ell( B_{i}),
\]
since for all $i>r$ the integral vanishes and using again the observation (\ref{observation1}). It remains to estimate the second term in (\ref{eq:int-partition}). 
We consider again the three cases for $n$, we remind the reader the observation $C_{n}\cap B_{k} = \emptyset$, for all $k\leq n$ and that $C_{n}\cap B_{k} = B_{k}$ whenever $k>n$. So, for $n\geq s$,
\[
\int_{C_{n}}\sum_{k=r+1}^{s} A_{k}\phi_{k}(x) dx = \int_{[0,1]}\sum_{k=r+1}^{n}A_{k} \chi_{C_{n}\cap B_{k}}(x)dx = 0.
\]
 For $r+1 \leq n < s$, one has
 \[
\int_{C_{n}}\sum_{k=r+1}^{s} A_{k}\phi_{k}(x) dx = \sum_{i=n+1}^{s}A_{i}\ell(B_{i}). 
 \]
And finally, for $n<r+1$,
\[
\int_{C_{n}}\sum_{k=r+1}^{s} A_{k}\phi_{k}(x) dx = \sum_{i=r+1}^{s}A_{i}\ell(B_{i}). 
\]
So putting all together, one has that for every $n$, 
\begin{eqnarray*}
||\psi_{s} - \psi_{r}||_{L^{1}} 
&=& \sum_{i=r+1}^{s}A_{i}\ell(B_{i}).
\end{eqnarray*}
Provided that, for every $i$, the length $\ell(B_{i}) = \frac{1}{2}\big(\frac{2}{3}\big)^{i}$, thus, making $A_{i} = o\big(\left(\frac{3}{2}\right)^{i}\big)$, one has that the summation $\sum_{i=r+1}^{s}A_{i}\ell(B_{i})$ goes to zero when $r,s$ are sufficiently large. That finishes the proof for the $L^{1}$-norm. 
Next, for $p>1$, we proceed analogously. By now, we will focus on the term,
\begin{eqnarray*}
| \psi_{s}(x)-\psi_{r}(x)|^{p} &=&\Big( \sum_{k=r+1}^{s} A_{k}\phi_{k}(x)\Big)^{p}\\
&=& \sum_{j_{1}+\cdots+j_{s-r}=p}{ {p}\choose {j_{1},\ldots,j_{s-r}} }\prod_{t=1}^{s-r}(A_{t}\phi_{t}(x))^{j_{t}}.
\end{eqnarray*}
Where the last part is the multinomial. Observe that independently of the combination of terms in the product, one has that $\prod_{t=1}^{s-r} \phi_{t}(x)^{j_{t}} = 1$ only when $j_{t} = p$ for every single $t$, and is 0 otherwise. This is true because of the the observation (\ref{observation1}). Therefore for every $x$ and every $p$,
\[
\Big(\sum_{k=r+1}^{s}A_{k}\phi_{k}(x)\Big)^{p} = \sum_{k=r+1}^{s}A_{k}^{p}\phi_{k}(x).
\]
Then, we proceed exactly as in the previous case, obtaining,
\[
||\psi_{s}-\psi_{r}||_{L^{p}} =\Big( \sum_{i=r+1}^{s}A_{i}^{p}\ell(B_{i})\Big)^{1/p}, 
\]
and so, choosing $A_{i}=o\big(\left(\frac{3}{2}\right)^{i/p}\big)$ allows us to obtain the desired result.
\qed


\subsection{Proof of Theorem~\ref{theorem:beta<beta_c}. }\label{proof:beta<beta_c} 

Now we proceed to prove that the sequence $\{ \mu_{\beta,n} : n\in\mathbb{N}\}$ 
converges to a limit measure in the information-theoretic sense by using the Kullback-Leibler divergence. 

First of all let us recall that, in the case of $\beta<\beta_c$, the partition function $Z_{\beta,n}$ given by (\ref{eq:Partition}) converges in the limit $n\to\infty$ to 
\[
\lim_{n\to\infty } Z_{\beta,n} = \frac{1}{2} \frac{1}{1-e^{\beta \alpha - h }}.
\]
And thus the density for the expected limit measure $\mu_{\beta}$ should be,
\[
\rho_{\beta} (x) = \frac{e^{-\beta \psi(x) }}{ Z_\beta},
\]
where $\psi$ is the fractal potential obtained as the limit of the sequence $\psi_{n}$, assured by Proposition~\ref{prop:conver-potential}.
Then our goal in the following, will be to prove that the sequence of  finite-step densities  $\{ \rho_{\beta,n} : n\in\mathbb{N}\}$ converges in the information-theoretic sense towards $\rho_{\beta}$, i.e., 
\[
D(\rho_{\beta,n} \Vert \rho_{\beta} ) \to 0\quad \mbox{as} \quad n\to\infty,
\]
where the Kullback-Leibler divergence $D (\cdot \Vert \cdot )$  between $\rho_{\beta,n}$ and  $\rho_{\beta}$ is given by
\[
D(\rho_{\beta,n} \Vert \rho_{\beta} )  = \int_{[0,1]} \rho_{\beta,n} (x) \log\left( \frac{\rho_{\beta,n}(x)}{\rho_{\beta} (x)}\right)dx.
\]
From the definition of the density of the finite-step measure $\mu_{\beta,n}$, we have that,
\begin{eqnarray}
D(\rho_{\beta,n} \Vert \rho_{\beta} )  &=& 
\int_{[0,1]} \frac{e^{-\beta \psi_n(x)}}{Z_{\beta,n}}  \log\left(\frac{e^{-\beta \psi_n(x)}}{Z_{\beta,n}} \frac{ Z_\beta}{e^{-\beta \psi(x) }}  \right)dx
\nonumber
\\
&=&
\log\left( \frac{ Z_\beta}{Z_{\beta,n}}  \right) 
+ \beta \int_{[0,1]} \left(\psi(x) -\psi_{n}(x) \right) \frac{e^{-\beta \psi_n(x)}}{Z_{\beta,n}}  dx.
\nonumber
\end{eqnarray}
And so, one obtains the following inequality,
\begin{eqnarray}
\hspace{-0.5cm}
D(\rho_{\beta,n} \Vert \rho_{\beta}) &\leq& \left\vert  \log\left( \frac{ Z_\beta}{Z_{\beta,n}}  \right) \right\vert  + \beta \int_{[0,1]} \left\vert \psi(x) -\psi_n(x) \right \vert \frac{e^{-\beta \psi_n(x)}}{Z_{\beta,n}}  dx.
\label{eq:inter-Dpsi-1}
\end{eqnarray}
Next, since $\phi_k$ is zero outside the set $B_k$, and $\psi(x)-\psi_{n}(x)$ is zero over $B_k $ for all $1\leq k \leq n$, we have that the second part of the right-hand side of inequality (\ref{eq:inter-Dpsi-1}) is equal to
\begin{eqnarray*}
\int_{[0,1]} \left\vert \psi(x) -\psi_n(x) \right \vert \frac{e^{-\beta \psi_n(x)}}{Z_{\beta,n}}  dx &=& 
\frac{1}{Z_{\beta,n}} \int_{C_n} \left\vert \psi(x) -\psi_n(x) \right \vert   dx 
\\
&\leq&
\frac{1}{Z_{\beta,n}} \Vert \psi(x) -\psi_n(x) \Vert_{L_1}.
\end{eqnarray*}
Then going back to~(\ref{eq:inter-Dpsi-1}) we can see that
\begin{eqnarray}
\fl
\qquad \qquad
 D(\rho_{\beta,n} \Vert \rho_{\beta} )  \leq \left\vert  \log\left( \frac{ Z_\beta}{Z_{\beta,n}}  \right) \right\vert  + \beta  \frac{1}{Z_{\beta,n}} \Vert \psi(x) -\psi_n(x) \Vert_{L_1}
\nonumber
\end{eqnarray}
\noindent
As we saw in Proposition~\ref{prop:conver-potential} we have that $\psi_n$ converges in $L_1$ to the limit potential $\psi$. Thus it is clear that $ \Vert \psi(x) -\psi_n(x) \Vert_{L_1} \to 0$ as $n\to\infty$. On the other hand we have also established that $Z_{\beta,n}$ goes to $Z_\beta$. 
Thus, we conclude that
\[
D(\rho_{\beta,n} \Vert \rho_{\beta} ) \to 0,
\]
as $n\to\infty$ which finishes the proof.
\qed


\subsection{Proof of Theorem~\ref{theorem:Cantor-dist}}\label{proofTheoCantorDist}
This section is devoted to the proof of Theorem~\ref{theorem:Cantor-dist}. We will prove that for every measurable Lipschitz continuous and bounded above function $f$, the sequence $\{\mu_{n}(f)\}$ is a Cauchy sequence with respect to the weak distance. Since the space of probability distributions with the weak distance is a complete metric space, the limit distribution $\mu_{*}$, actually exists and is a probability distribution. We will prove that given a measurable Lipschitz continuous and bounded above function $f:[0,1]\to\mathbb{R}$, for every $\varepsilon>0$ there exists a $N\in\mathbb{N}$ such that for $n, m>N$ one has that
\[
\bigg| \int_{[0,1]} fd\mu_{n} -\int_{[0,1]}fd\mu_{m}\bigg| < \varepsilon.
\]
\noindent
In order to do this, we give a list of useful lemmas that will be used in the proof of our theorem.

\begin{lemma}
\label{lemma:mass-concentration}
Let $\beta > \beta_c $, and $\mu_{\beta, n}$ be the Gibbs-Boltzmann measure associated to the $n$-step potential. Then for every $\epsilon >0$ and any $k\in \mathbb{N}$ there is a $N \in \mathbb{N}$ such that for all $n > N \geq k$, one has that,
\[
\mu_{\beta,n} (C_{k}) >1-\epsilon.
\] 
\end{lemma}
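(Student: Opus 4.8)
The plan is to control the complementary mass $\mu_{\beta,n}\big([0,1]\setminus C_k\big)$ and to show that it tends to $0$ as $n\to\infty$, which immediately yields the quantified statement. The starting point is that, for $n\geq k$, the partition $[0,1] = \big(\bigcup_{i=1}^{k}B_i\big)\cup C_k$ from~(\ref{eq:partition}) is coarser than the one induced by $\psi_n$, so that each $B_i$ with $i\leq k$ lies entirely in a region where $\psi_n$ is the constant $-A_i=-\alpha i$. Using the disjointness~(\ref{observation1}) of the sets $B_i$, this gives $\mu_{\beta,n}\big([0,1]\setminus C_k\big) = \sum_{i=1}^{k}\mu_{\beta,n}(B_i)$.

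The next step is to evaluate each term. Since $\psi_n(x)=-\alpha i$ for $x\in B_i$ (with $i\leq k\leq n$) and $\ell(B_i)=\frac{1}{2}(2/3)^{i}$, we obtain, with $h=-\log(2/3)$ as in Remark~\ref{prop:Z},
\[
\mu_{\beta,n}(B_i) = \frac{1}{Z_{\beta,n}}\int_{B_i} e^{\beta\alpha i}\,dx = \frac{\ell(B_i)\,e^{\beta\alpha i}}{Z_{\beta,n}} = \frac{1}{2}\,\frac{e^{(\beta\alpha-h)i}}{Z_{\beta,n}}.
\]
Summing over $1\leq i\leq k$ yields
\[
\mu_{\beta,n}\big([0,1]\setminus C_k\big) = \frac{1}{Z_{\beta,n}}\cdot\frac{1}{2}\sum_{i=1}^{k}e^{(\beta\alpha-h)i} =: \frac{S_k}{Z_{\beta,n}},
\]
where $S_k<\infty$ is a constant depending only on $k$, $\alpha$ and $\beta$ (being a finite sum, its finiteness does not depend on the sign of $\beta\alpha-h$).

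Finally I would invoke Remark~\ref{prop:Z}: for $\beta>\beta_c$ we have $\beta\alpha-h>0$, and hence $Z_{\beta,n}=\frac{1}{2}\sum_{j=1}^{n}e^{(\beta\alpha-h)j}+e^{-hn}\to\infty$ as $n\to\infty$. Thus, given $\epsilon>0$ and $k$, one chooses $N\geq k$ large enough that $Z_{\beta,n}>S_k/\epsilon$ for all $n>N$; then $\mu_{\beta,n}\big([0,1]\setminus C_k\big)=S_k/Z_{\beta,n}<\epsilon$, i.e.\ $\mu_{\beta,n}(C_k)>1-\epsilon$, which is the claim. There is essentially no genuine obstacle in this argument; the only point requiring care is to keep the hypothesis $n\geq k$ in force throughout, so that the decomposition of $[0,1]\setminus C_k$ into the intervals $B_i$ with $i\leq k$ is compatible with the piecewise-constant form of $\psi_n$ on those intervals, and so that the divergence of $Z_{\beta,n}$ from Remark~\ref{prop:Z} applies.
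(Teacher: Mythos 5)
Your proof is correct and takes essentially the same route as the paper: both reduce the claim to the identity $1-\mu_{\beta,n}(C_k)=\frac{1}{2}\sum_{i=1}^{k}e^{(\beta\alpha-h)i}\big/Z_{\beta,n}$ (the paper writes it via the partition $C_k=\big(\bigcup_{m=k+1}^{n}B_m\big)\cup C_n$, you via the complement $\bigcup_{i=1}^{k}B_i$, which is the same computation) and then conclude from the divergence of $Z_{\beta,n}$ for $\beta>\beta_c$. The only real difference is that the paper pushes the estimate further to the explicit exponential bound $1-\mu_{\beta,n}(C_k)<J_\beta e^{-(\beta\alpha-h)(n-k)}$, yielding an $N$ of the form $k+\big|\log(\epsilon/J_\beta)/(\beta\alpha-h)\big|$ whose offset is uniform in $k$ --- a refinement your qualitative argument does not state but which Corollary~\ref{corollary:mass-concentration} later relies on.
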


\begin{corollary} 
\label{corollary:mass-concentration}
Let $\beta > \beta_c $. For all $\epsilon >0$ there is a $N \in \mathbb{N}$ such that 
\begin{equation*}
\sum_{k=1}^L \mu_{\beta,m+L} (B_{k}) < \epsilon
\end{equation*}
for all $m > N$ and all $L\in \mathbb{N}$. 
\end{corollary}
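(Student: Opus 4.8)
The plan is to evaluate $\sum_{k=1}^{L}\mu_{\beta,m+L}(B_k)$ essentially in closed form, using the same bookkeeping as in Remark~\ref{prop:Z}, and then to read off a bound that is uniform in $L$. I would deliberately \emph{not} try to invoke Lemma~\ref{lemma:mass-concentration} as a black box: as stated there, the threshold $N$ is allowed to depend on the level $k$ of the Cantor approximation, whereas here the relevant level is $k=L$ and ranges over all of $\mathbb{N}$, so a black-box application would not obviously give the uniformity in $L$ that the statement demands.

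First I would record the exact value of each term. Since $\psi_{m+L}\equiv -\alpha k$ on $B_k$ for $1\leq k\leq m+L$ and $\ell(B_k)=\frac{1}{2}(2/3)^k$, the manipulation leading to (\ref{eq:Partition}) gives $\mu_{\beta,m+L}(B_k)=\frac{1}{2}e^{(\beta\alpha-h)k}/Z_{\beta,m+L}$. Writing $\gamma:=\beta\alpha-h$, which is strictly positive precisely because $\beta>\beta_c$, and using the expression (\ref{eq:Partition}) for $Z_{\beta,m+L}$, this yields
\[
\sum_{k=1}^{L}\mu_{\beta,m+L}(B_k)=\frac{\frac{1}{2}\sum_{k=1}^{L}e^{\gamma k}}{\frac{1}{2}\sum_{k=1}^{m+L}e^{\gamma k}+e^{-h(m+L)}}\leq\frac{\sum_{k=1}^{L}e^{\gamma k}}{\sum_{k=1}^{m+L}e^{\gamma k}}.
\]
(Equivalently, since the partition (\ref{eq:partition}) telescopes, the left-hand side equals $1-\mu_{\beta,m+L}(C_L)$, the quantity controlled pointwise by Lemma~\ref{lemma:mass-concentration}.) The key step is then the elementary estimate of the ratio of geometric sums: because $e^{-\gamma m}\leq 1$ one has $e^{\gamma L}-1\leq e^{\gamma L}-e^{-\gamma m}=e^{-\gamma m}\big(e^{\gamma(m+L)}-1\big)$, hence
\[
\frac{\sum_{k=1}^{L}e^{\gamma k}}{\sum_{k=1}^{m+L}e^{\gamma k}}=\frac{e^{\gamma L}-1}{e^{\gamma(m+L)}-1}\leq e^{-\gamma m}.
\]
Thus $\sum_{k=1}^{L}\mu_{\beta,m+L}(B_k)\leq e^{-\gamma m}$ for every $L$, a bound free of $L$, and choosing $N$ so that $e^{-\gamma N}<\epsilon$ completes the proof.

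The only point requiring any care is this last cancellation — recognising that the total Gibbs mass sitting below level $L$ scales like $e^{\gamma L}$, while the normalising constant at level $m+L$ is of order $e^{\gamma(m+L)}$, so that their quotient is $O(e^{-\gamma m})$ regardless of $L$; everything else is just the partition $(\ref{eq:partition})$ bookkeeping already used in Remark~\ref{prop:Z}. If one prefers to avoid the exact geometric-sum identity, bounding the numerator by $\frac{e^{\gamma}}{e^{\gamma}-1}e^{\gamma L}$ and the denominator below by its top term $\frac{1}{2}e^{\gamma(m+L)}$ gives the same conclusion with the (worse) constant $\frac{e^{\gamma}}{e^{\gamma}-1}$, which is equally harmless.
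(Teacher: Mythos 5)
Your proof is correct. It rests on the same underlying fact as the paper's: writing $\sum_{k=1}^{L}\mu_{\beta,m+L}(B_k)=1-\mu_{\beta,m+L}(C_L)$ and observing that the ratio of the partial geometric sums up to levels $L$ and $m+L$ is $O(e^{-(\beta\alpha-h)m})$ uniformly in $L$. The difference is in how the uniformity in $L$ is secured. The paper does apply Lemma~\ref{lemma:mass-concentration}, but not as a black box: it reaches into that lemma's proof to extract the explicit threshold $N'=L+\big|\log(\epsilon/J_\beta)/(\beta\alpha-h)\big|$, whose additive dependence on the level makes the $L$'s cancel when $n=m+L$ and $k=L$. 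So your concern about a naive black-box application is well founded, and it is precisely the point the paper's argument has to address. Your inline computation sidesteps this entirely and, as a bonus, yields the cleaner bound $e^{-(\beta\alpha-h)m}$ with no constant $J_\beta$; the paper's route has the advantage of reusing work already done and making explicit that the corollary is just a restatement of the mass-concentration estimate at level $k=L$. Both are valid; yours is the more self-contained of the two.
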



\begin{lemma} 
\label{lemma:averages-f}
Let  $f : [0,1] \to \mathbb{R}$ be a bounded-above  Lipschitz continuous function. Then for every $\epsilon >0$ there exits a $N$ such that for all $n>m \geq N $ one has that,
\begin{equation*}
\bigg| \frac{1}{\ell (B_n)} \int_{B_n} f(x)dx - \frac{1}{\ell (B_m)} \int_{B_m} f(x)dx \bigg|  < \epsilon.
\end{equation*}
\end{lemma}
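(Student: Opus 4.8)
The plan is to show that the sequence of averages $a_n := \frac{1}{\ell(B_n)}\int_{B_n} f(x)\,dx$ is Cauchy with an explicit geometric rate, by exploiting the nested self‑similar structure of the sets $B_n$ inside the $C_n$. First I would record two preliminaries. Since $f$ is (globally) Lipschitz on the compact interval $[0,1]$, say with constant $L$, it is bounded, so each $a_n$ is a well‑defined finite number; moreover, for any interval $J\subseteq[0,1]$, any sub‑interval $I\subseteq J$, and any point $x_0\in J$, one has the elementary bound
\[
\Bigl| \frac{1}{\ell(I)}\int_I f(x)\,dx - f(x_0) \Bigr| \le \frac{1}{\ell(I)}\int_I |f(x)-f(x_0)|\,dx \le L\,\ell(J),
\]
because $|x-x_0|\le\ell(J)$ for every $x\in I$.

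Next I would set up the combinatorics. For $n>m$ one has $B_n\subseteq C_{n-1}\subseteq C_{m-1}$; the set $C_{m-1}$ is a disjoint union of $2^{m-1}$ intervals, each of length $3^{-(m-1)}$, and $B_k$ is the union of the middle thirds removed from the intervals of $C_{k-1}$. Hence each interval $J$ of $C_{m-1}$ contains exactly one component interval of $B_m$ (its own middle third) and exactly $2^{n-m}$ component intervals of $B_n$, and every component interval of $B_m$ or $B_n$ arises this way (note $2^{m-1}\cdot2^{n-m}=2^{n-1}$, the correct total count for $B_n$). Since all component intervals of a fixed $B_k$ have the same length, splitting the integral over components gives
\[
a_k = \frac{1}{2^{k-1}} \sum_{I\subseteq B_k} \frac{1}{\ell(I)}\int_I f(x)\,dx ,
\]
and for $k\in\{m,n\}$ I would reorganize this sum according to which interval $J$ of $C_{m-1}$ contains $I$. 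Fixing for each such $J$ an arbitrary point $x_J\in J$, the bound from the first paragraph (with $\ell(J)=3^{-(m-1)}$) shows that the single $B_m$‑term associated to $J$, and likewise the average of the $2^{n-m}$ $B_n$‑terms associated to $J$, each lie within $L\,3^{-(m-1)}$ of $f(x_J)$. Averaging over the $2^{m-1}$ intervals $J$ and applying the triangle inequality against the common reference value $\frac{1}{2^{m-1}}\sum_J f(x_J)$ yields
\[
|a_n - a_m| \le \Bigl| a_n - \frac{1}{2^{m-1}}\sum_J f(x_J) \Bigr| + \Bigl| a_m - \frac{1}{2^{m-1}}\sum_J f(x_J) \Bigr| \le 2L\,3^{-(m-1)}.
\]

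Finally, given $\epsilon>0$, I would choose $N$ so large that $2L\,3^{-(N-1)}<\epsilon$ (if $L=0$ then $f$ is constant and the statement is trivial); then for all $n>m\ge N$ the displayed estimate gives $|a_n-a_m|\le 2L\,3^{-(m-1)}\le 2L\,3^{-(N-1)}<\epsilon$, which is precisely the assertion of the lemma. I do not anticipate a real obstacle; the only step demanding care is the bookkeeping of the nested interval structure — counting correctly that each interval of $C_{m-1}$ hosts one component of $B_m$ and $2^{n-m}$ components of $B_n$, and keeping the normalizations $2^{m-1}$ and $2^{n-1}$ consistent — together with the observation that the Lipschitz comparison point $x_J$ may be chosen freely inside $J$, which is what allows the $B_m$‑term and the $B_n$‑terms over $J$ to be compared against one and the same value.
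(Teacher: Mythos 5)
Your proposal is correct and follows essentially the same route as the paper's proof: decompose $B_m$ and $B_n$ by the $2^{m-1}$ component intervals of $C_{m-1}$, compare everything to a reference value $f(x_J)$ with $x_J$ in each such interval, and use the Lipschitz constant together with the diameter $3^{-(m-1)}$ to obtain the bound $2L\,3^{-(m-1)}$, which is exactly the paper's $2K/3^{m-1}$. The only difference is cosmetic: you bound the interval averages directly by $L\,\ell(J)$, whereas the paper routes the same estimate through the mean value theorem for integrals and the intermediate value theorem.
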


\begin{lemma}
\label{lemma:mass-concentration-bar}
For $\beta > \beta_c$ and for all $\epsilon > 0$ there is a $N \in \mathbb{N}$ such that 
\begin{equation*}
\sum_{l=0}^{L-1} \big| \mu_{\beta,n} \left( B_{n-l} \right)  -\mu_{\beta,m} \left( B_{m-l} \right)  \big| < \epsilon
\end{equation*}
for all $n,m >N$ and all $L $ such that $1\leq L \leq \min\{n,m\} $.

\end{lemma}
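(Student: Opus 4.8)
The plan is to compute $\mu_{\beta,n}(B_{n-l})$ explicitly from the partition function and show that, for $\beta > \beta_c$, this quantity depends (in the limit of large $n$) only on $l$, not on $n$. Recall from Remark~\ref{prop:Z} that with $A_k = \alpha k$ one has $\mu_{\beta,n}(B_k) = \tfrac12 e^{(\beta\alpha - h)k}/Z_{\beta,n}$ and $\mu_{\beta,n}(C_n) = e^{-hn}/Z_{\beta,n}$, where $Z_{\beta,n} = \tfrac12\sum_{k=1}^n e^{(\beta\alpha-h)k} + e^{-hn}$. Writing $\gamma := \beta\alpha - h > 0$ (this is exactly the regime $\beta > \beta_c$), I would substitute $k = n - l$ and factor out $e^{\gamma n}$ from both numerator and denominator:
\[
\mu_{\beta,n}(B_{n-l}) = \frac{\tfrac12 e^{\gamma(n-l)}}{\tfrac12\sum_{j=1}^{n} e^{\gamma j} + e^{-hn}}
= \frac{\tfrac12 e^{-\gamma l}}{\tfrac12\sum_{j=1}^{n} e^{-\gamma(n-j)} + e^{-(h+\gamma)n}}
= \frac{\tfrac12 e^{-\gamma l}}{\tfrac12\sum_{i=0}^{n-1} e^{-\gamma i} + e^{-(h+\gamma)n}}.
\]
As $n\to\infty$ the denominator converges to $\tfrac12\cdot\frac{1}{1-e^{-\gamma}}$ (the trailing term $e^{-(h+\gamma)n}\to0$ since $h,\gamma>0$), so $\mu_{\beta,n}(B_{n-l}) \to (1-e^{-\gamma})e^{-\gamma l} =: p_l$, a limit independent of $n$, with a rate of convergence that is uniform in $l$ because the $e^{-\gamma l}$ in the numerator is bounded by $1$ and the denominator is bounded below by a positive constant.

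From here the lemma follows by a triangle-inequality argument. For $n, m > N$ and $l \le \min\{n,m\}-1$,
\[
\big|\mu_{\beta,n}(B_{n-l}) - \mu_{\beta,m}(B_{m-l})\big| \le \big|\mu_{\beta,n}(B_{n-l}) - p_l\big| + \big|\mu_{\beta,m}(B_{m-l}) - p_l\big| \le 2\, C\, e^{-(h+\gamma)N}
\]
for a constant $C$ coming from the uniform estimate above (the $e^{-\gamma l}$ factor only helps). Summing over $l = 0, \dots, L-1$ gives
\[
\sum_{l=0}^{L-1}\big|\mu_{\beta,n}(B_{n-l}) - \mu_{\beta,m}(B_{m-l})\big| \le \sum_{l=0}^{\infty} 2C\, e^{-(h+\gamma)N}\, e^{-\gamma l}\cdot(\text{bounded}),
\]
but I must be slightly more careful here: naively bounding each term by $2Ce^{-(h+\gamma)N}$ and summing $L$ terms would not give a bound uniform in $L$. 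The fix is to keep the geometric decay in $l$ explicit in the error term — that is, to show $|\mu_{\beta,n}(B_{n-l}) - p_l| \le C' e^{-\gamma l} e^{-(h+\gamma)N}$ — which is visible from the display above once one estimates the difference of the two fractions with a common numerator proportional to $e^{-\gamma l}$. Then $\sum_{l=0}^{L-1} |\cdots| \le 2C' e^{-(h+\gamma)N}\sum_{l=0}^{\infty} e^{-\gamma l} = \frac{2C' e^{-(h+\gamma)N}}{1-e^{-\gamma}}$, which is $< \epsilon$ for $N$ large, uniformly in $L$ with $1\le L\le\min\{n,m\}$.

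The main obstacle is precisely this uniformity in $L$: one cannot afford a per-term error that is merely $o(1)$ in $N$, so the computation has to be arranged to expose the geometric factor $e^{-\gamma l}$ in the error bound, not just in the main term $p_l$. Concretely this means estimating $\big|\tfrac{a_l}{D_n} - \tfrac{a_l}{D_\infty}\big| = a_l\,\tfrac{|D_\infty - D_n|}{D_n D_\infty}$ with $a_l = \tfrac12 e^{-\gamma l}$, $D_n = \tfrac12\sum_{i=0}^{n-1}e^{-\gamma i} + e^{-(h+\gamma)n}$, $D_\infty = \tfrac12(1-e^{-\gamma})^{-1}$, and noting $|D_\infty - D_n| \le C'' e^{-\gamma n} \le C'' e^{-\gamma N}$ while $D_n D_\infty$ is bounded below — and also handling $l = 0$ together with the $C_n$-term bookkeeping, though that term does not appear in the sum as stated. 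Everything else is routine.
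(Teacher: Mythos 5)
Your proof is correct and follows essentially the same route as the paper's: both arguments rest on the explicit formula $\mu_{\beta,n}(B_{n-l}) = \frac{1}{2} e^{-\gamma l}\, e^{\gamma n}/Z_{\beta,n}$ (with $\gamma = \beta\alpha - h$, the paper's $b$), the key observation that the geometric factor $e^{-\gamma l}$ survives into the error term so that the sum over $l$ is controlled uniformly in $L$, and a positive lower bound on the normalized partition function; the paper merely estimates the difference $\big| e^{\gamma n}/Z_{\beta,n} - e^{\gamma m}/Z_{\beta,m}\big|$ directly rather than passing through the limit $p_l$ via the triangle inequality. One small slip worth fixing: the per-term error is only $O(e^{-\gamma N})$ (the tail $\sum_{i\geq n} e^{-\gamma i}$ of the series in the denominator dominates the $e^{-(h+\gamma)n}$ term), not $O(e^{-(h+\gamma)N})$ as written in your middle display --- your final paragraph already states the correct rate, and nothing in the conclusion is affected.
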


\begin{lemma}
\label{lemma:residual}

For all $\epsilon > 0$ there is a $N \in \mathbb{N}$ such that
\[
 \mu_{\beta,n}(C_n) < \epsilon 
\]
for all $n>N$.
\end{lemma}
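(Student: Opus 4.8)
The statement reduces to a direct computation of $\mu_{\beta,n}(C_n)$ together with the behaviour of the partition function recorded in Remark~\ref{prop:Z} (we work, as throughout this section, in the regime $\beta>\beta_c$). The plan is the following.

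First I would use that the finite-step potential $\psi_n$ vanishes identically on $C_n$, so that the Boltzmann factor is constant equal to $1$ there and hence
\[
\mu_{\beta,n}(C_n) = \frac{1}{Z_{\beta,n}}\int_{C_n} e^{-\beta\psi_n(x)}\,dx = \frac{\ell(C_n)}{Z_{\beta,n}} = \frac{(2/3)^n}{Z_{\beta,n}} = \frac{e^{-hn}}{Z_{\beta,n}},
\]
where $h=-\log(2/3)$ as in Remark~\ref{prop:Z}. In particular, since $(2/3)^n\le 1$ for every $n$, one has the crude bound $\mu_{\beta,n}(C_n)\le 1/Z_{\beta,n}$.

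Second, since $\beta>\beta_c=h/\alpha$ we have $\beta\alpha-h>0$, and the explicit formula~(\ref{eq:Partition}), $Z_{\beta,n}=\tfrac12\sum_{k=1}^n e^{(\beta\alpha-h)k}+e^{-hn}$, shows that $Z_{\beta,n}\to\infty$ as $n\to\infty$, because the geometric series has ratio $e^{\beta\alpha-h}>1$. Therefore, given $\epsilon>0$, I would choose $N\in\mathbb{N}$ so large that $Z_{\beta,n}>1/\epsilon$ for all $n>N$; combining this with the bound of the previous step gives $\mu_{\beta,n}(C_n)\le 1/Z_{\beta,n}<\epsilon$ for all $n>N$, which is exactly the claim.

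If a quantitative rate were desired, one could instead keep only the $k=n$ term in the partition function, $Z_{\beta,n}\ge\tfrac12 e^{(\beta\alpha-h)n}$, obtaining $\mu_{\beta,n}(C_n)\le 2\,e^{-hn}e^{-(\beta\alpha-h)n}=2\,e^{-\beta\alpha n}$, so that the convergence is in fact exponentially fast. There is no genuine obstacle in this argument; the only point requiring care is to invoke the hypothesis $\beta>\beta_c$ (equivalently $\beta\alpha>h$), since this is precisely what forces $Z_{\beta,n}$ to diverge while $\ell(C_n)=(2/3)^n$ remains bounded.
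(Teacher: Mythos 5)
Your proposal is correct and follows essentially the same route as the paper: both compute $\mu_{\beta,n}(C_n)=e^{-hn}/Z_{\beta,n}$ and exploit the divergence of $Z_{\beta,n}$ for $\beta>\beta_c$, and your quantitative remark recovers the paper's own exponential bound $\mu_{\beta,n}(C_n)\lesssim e^{-\beta\alpha n}$ (the paper writes $Z_{\beta,n}=K_1e^{(\beta\alpha-h)n}+e^{-hn}$ with $K_1=(1-e^{-(\beta\alpha-h)})^{-1}$, whereas you simply keep the $k=n$ term of the sum, which is if anything cleaner). No gaps.
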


\noindent 
Now, let us going on the proof of the theorem. First let us recall that we are in the case  $\beta>\beta_{c}$. According to the definition of $\mu_n$ we have that
\[
\int_{[0,1]} f d\mu_n = \int_{[0,1]} f(x) \frac{e^{-\beta \psi_n(x)}}{Z_n} dx. 
\]
Since $\psi$ is piecewise constant on the sets $B_k$ for $1\leq k \leq n$, it is natural to, and we can always, decompose the integral as a series by using the partition $\{B_k :  1\leq k \leq n \} \cup \{C_n\}$. This gives us
\begin{eqnarray}
\fl
\int_{[0,1]} f d\mu_n &=& \sum_{k=1}^n \int_{B_k} f(x) \frac{e^{-\beta \psi_n(x)}}{Z_n} dx + \int_{C_n} f(x) \frac{e^{-\beta \psi_n(x)}}{Z_n} dx
\nonumber
\\
\fl
&=& 
\sum_{k=1}^n \int_{B_k} f(x) \frac{e^{\beta A_k}}{Z_n} dx + \frac{1}{Z_n} \int_{C_n} f(x)  dx
\nonumber
\\
\fl
&=& 
\sum_{k=1}^n \left(  \frac{e^{\beta A_k}}{Z_n}\ell(B_k) \right) \frac{1}{\ell(B_k)}\int_{B_k} f(x)  dx + \left( \frac{\ell(C_n)}{Z_n} \right) \frac{1}{\ell(C_n)} \int_{C_n} f(x)  dx.
\nonumber
\end{eqnarray}
Where we used the fact that $\psi(x) = -A_k $ for $x\in B_k$, and $\psi(x) = 0$ elsewhere. Next, notice that the quantities inside the parenthesis  in the above equation can be written as follows:
\begin{eqnarray}
 \frac{e^{\beta A_k}}{Z_n}\ell(B_k) &=& \int_{B_k}  \frac{e^{\beta A_k}}{Z_n} dx =
 \int_{B_k}  \frac{e^{-\beta \psi_{n}(x)}}{Z_n} dx = \mu_{n}(B_k ),
\nonumber 
\\
 \frac{\ell(C_n)}{Z_n}  &=& \int_{C_n} \frac{1}{Z_n} dx  = \int_{C_n} \frac{e^{-\beta \psi_{n}(x)}}{Z_n} dx = \mu_n(C_n).
 \nonumber
 \end{eqnarray}
Then we have that
\begin{eqnarray}
\fl
\int_{[0,1]} f d\mu_n &=&
\sum_{k=1}^n \mu_{n}(B_k  ) \frac{1}{\ell(B_k)}\int_{B_k} f(x)  dx + 
 \mu_n(C_n) \frac{1}{\ell(C_n)} \int_{C_n} f(x)  dx. 
\label{eq:int-f-sum1}
\end{eqnarray}
In order to enlighten the calculations, let us introduce the notation $\ffd_k $ and $ \ff_n$ for the following averages:
\begin{equation*}
\ffd_k :=  \frac{1}{\ell(B_k)}\int_{B_k} f(x)  dx,
\ \hspace{1cm}
\ff_n  = \frac{1}{\ell(C_n)} \int_{C_n} f(x)  dx.
\end{equation*}
Using this short-hand notation we can rewrite expression~(\ref{eq:int-f-sum1}) as,
\begin{eqnarray}
\int_{[0,1]} f d\mu_n &=&
\sum_{k=1}^n \mu_{n}(B_k  ) \ffd_k + \mu_n(C_n) \ff_n. 
\label{eq:int-f-sum2}
\end{eqnarray}
If we assume that $n>m$ we can write $n = m + L$, with $L\geq 0$. Using this fact, we have that the first summand of the right hand side of the equation~(\ref{eq:int-f-sum2}) can be written as, 
\[
\hspace{-2.5cm}
\sum_{k=1}^n \mu_{n}(B_k  ) \ffd_k =  \sum_{k= 1}^{m+L} \mu_{m+L}(B_k  ) \ffd_k
=\sum_{k= 1}^{L} \mu_{m+L}(B_k  ) \ffd_k + \sum_{k= 1}^{m} \mu_{m+L}(B_{k+L}  ) \ffd_{k+L}.
\]
So, using an expression for $\int_{[0,1]} f d\mu_m$ analogous to (\ref{eq:int-f-sum2}), we write,
\begin{eqnarray*}
\fl
\int_{[0,1]} f d\mu_n - \int_{[0,1]} f d\mu_m =  
\sum_{k= 1}^{m} \mu_{m+L}(B_{k+L}  ) \ffd_{k+L} -\sum_{k=1}^m \mu_{m}(B_k  ) \ffd_k \\
\hspace{3cm}+ \sum_{k= 1}^{L} \mu_{m+L}(B_k  ) \ffd_k +  \mu_n(C_n) \ff_n - \mu_n(C_m) \ff_m.
\end{eqnarray*}
Hence, by adding and subtracting the quantity $\sum_{k=1}^{m}\mu_{m}(B_{k})\ffd_{k+L}$ and then using the triangle inequality, one obtains the following inequality,
\begin{eqnarray}
\fl
\bigg| \int_{[0,1]} f d\mu_n - \int_{[0,1]} f d\mu_m \bigg|  \leq  
\sum_{k= 1}^{m}  \big|  \mu_{m+L}(B_{k+L} ) - \mu_{m}(B_k  ) \big|
\big|   \ffd_{k+L}\big| +
\nonumber
\\ \hspace{-2cm}
\sum_{k= 1}^{m} \big| \ffd_{k+L}- \ffd_k \big| \mu_{m}(B_k  )
+ \sum_{k= 1}^{L}  \mu_{m+L}(B_k  ) \big|\ffd_k \big| +  \mu_n(C_n) \big| \ff_n \big| +  \mu_m(C_m) \big|\ff_m\big|.
\label{ineq:inter-diff-fs}
\end{eqnarray}
Now let us define the following quantities $\tilde{f}_1:= \max\{\big|   \ffd_{k+L}\big| : 1\leq k \leq m \}$, $\tilde{f}_2 := \max\{\big|   \ffd_{k}\big| : 1\leq k \leq m \}$ and  $M:= \mbox{max}\{ \mu_{m}(B_k  ) : 1\leq k \leq m\}$. So that, the inequality~(\ref{ineq:inter-diff-fs}) can be recast into,
\begin{eqnarray*}
\fl
\bigg| \int_{[0,1]} f d\mu_n - \int_{[0,1]} f d\mu_m \bigg|  &\leq&  
\tilde{f}_1 \sum_{k= 1}^{m}  \big|  \mu_{m+L}(B_{k+L} ) - \mu_{m}(B_k  ) \big|
+ M \sum_{k= 1}^{m} \big| \ffd_{k+L}- \ffd_k \big|
\\ 
&+& \tilde{f}_2 \sum_{k= 1}^{m}  \mu_{m+L}(B_k  )  +  \mu_n(C_n) \big| \ff_n \big| +  \mu_m(C_m) \big|\ff_m\big|.
\end{eqnarray*}

\noindent 
Then by Lemmas~\ref{lemma:mass-concentration}, \ref{lemma:averages-f}, \ref{lemma:mass-concentration-bar} and \ref{lemma:residual} we have that given any $\epsilon^\prime$  one can find a $N$ such that 
\begin{eqnarray*}
\hspace{1cm}
\fl
\bigg|  \int_{[0,1]} f d\mu_n - \int_{[0,1]} f d\mu_m \bigg|  &\leq&  
\left( \tilde{f}_1  + M +  \tilde{f}_2   +  \big| \ff_n \big| + \big|\ff_m\big|\right) \epsilon^\prime
\end{eqnarray*}
for all $n,m>N$. Where $N$ is the maximum out of the following set of integers,
\[
N := \max\{ N_1 (\epsilon^\prime), N_2 (\epsilon^\prime), N_3 (\epsilon^\prime),N_4 (\epsilon^\prime) \}, 
\]
the specific integers given in the lemmas ~\ref{lemma:mass-concentration}, \ref{lemma:averages-f}, \ref{lemma:mass-concentration-bar} and \ref{lemma:residual},
\begin{eqnarray}
N_1 (\epsilon^\prime) &=& \bigg| \frac{\log (\epsilon/J_\beta) }{\beta \alpha -h } \bigg|, \quad \hspace{0.5cm} \quad
N_2 (\epsilon^\prime) =\bigg\lfloor \bigg| \frac{  \log\left(\epsilon/6K_L\right)}{\log(3)} \bigg| \bigg\rfloor+1,
\nonumber
\\
N_3(\epsilon^\prime)  &=&  \bigg| \log \left( \frac{\epsilon {G_1}^2K_1 }{2 (K_1  + 2)}\right)\bigg|, \quad \quad
N_4 (\epsilon^\prime)= \bigg|\frac{\log \left( \epsilon K_1\right)}{\beta \alpha}\bigg|.
\nonumber
\end{eqnarray}
Finally by making $F := \max\{ \tilde{f}_1,  \tilde{f}_2,  \big| \ff_n \big|, \big|\ff_m\big| \} $, the maximum of the averages of $f$ and observing that $M \leq 1$, we have that
\begin{eqnarray*}
\bigg|  \int_{[0,1]} f d\mu_n - \int_{[0,1]} f d\mu_m \bigg|  &\leq&  \left( 1+ 4F \right) \epsilon^\prime.
\end{eqnarray*}
Thus, choosing  $\epsilon^\prime$ as $\epsilon/\left( 1+ 4F \right)$ we prove that the sequence $\{\mu_{n}(f)\}$ is a Cauchy sequence. It remains to prove that the measure is concentrated at the Cantor set, but this is a consequence of lemmas ~\ref{lemma:mass-concentration}, \ref{lemma:mass-concentration-bar} and \ref{lemma:residual}.

\qed

\section{Proofs of the technical lemmas}\label{sec:proof-lemmas}

\subsection{Proof of lemma~\ref{lemma:mass-concentration}.}

Recall that we are in the case where $\beta>\beta_{c}$. In order to estimate the lower bound for $\mu_{\beta,n}(C_k)$ for a given $C_{k}$ and $n$, observe that for $n>k$ one has that
\[
C_k = \left( \bigcup_{m=k+1}^n B_m  \right)\cup C_n,
\] 
which implies that $\{B_m \}_{m=k+1}^n \cup \{C_n\}$ is a partition of $C_k$. Then, $\mu_{\beta,n}(C_{k})$, can be written as follows,
\begin{equation*}
\mu_{\beta,n}(C_k) = \sum_{m=k+1}^n  \int_{B_m} \frac{e^{-\beta \psi_n(x)} }{Z_{\beta,n}} dx + \int_{C_n}\frac{e^{-\beta \psi_n(x)} }{Z_{\beta,n}} dx.
\end{equation*}
Since $\psi_n(x)$ is constant on each element of the partition, $\psi_{n}(x) = A_m$ for all $x\in B_m$ and that $\psi_{n}(x) = 0$ for all $x\in C_n$, then, it is clear that
\begin{equation*}
\mu_{\beta,n}(C_k) = \frac{1}{Z_{\beta,n}} \sum_{m=k+1}^n  e^{\beta A_m} \ell(B_m)   + \frac{ \ell(C_n)}{Z_{\beta,n}}.
\end{equation*}
Recall that $h$ was defined so that,
$\ell( B_m) = \frac{1}{2}e^{-hm}$, and $\ell (C_n) = e^{-hn}$. Also recall that $A_k =  \alpha k $, then we have 
\begin{equation*}
\mu_{\beta,n}(C_k) = \frac{1}{Z_{\beta,n}}\frac{1}{2} \sum_{m=k+1}^n  e^{(\beta \alpha - h)m}  + \frac{ e^{-hn}}{Z_{\beta,n}}.
\end{equation*}
From the proposition~\ref{prop:Z}, the partition function $Z_n$ can be written explicitly as 
\[
Z_{\beta,n} = \frac{1}{2} \sum_{m=1}^n  e^{(\beta \alpha - h)m}  +  e^{-hn}.
\]
Then we have that
\begin{eqnarray*}
\mu_{\beta,n}(C_k) &=& \frac{  \frac{1}{2} \sum_{m=k+1}^n  e^{(\beta \alpha - h)m}  + e^{-hn} }{  \frac{1}{2} \sum_{m=1}^n  e^{(\beta \alpha - h)m}  +  e^{-hn} }
\\
&=& 1- \frac{  \frac{1}{2} \sum_{m=1}^k  e^{(\beta \alpha - h)m}  }{  \frac{1}{2} \sum_{m=1}^n  e^{(\beta \alpha - h)m}  +  e^{-hn} }. 
\end{eqnarray*}

Next, for the moment, let us focus in obtaining a bound for the quotient of geometric series in the previous equation. First of all, for the sake of clarity, let us denote by $b$ the difference $\beta \alpha -h$ and observe that 
\[
 \sum_{m=1}^k  e^{(\beta \alpha - h)m}   =  \frac{e^{b (k+1)} - e^b }{e^b - 1} = \frac{e^{b k} - 1}{ 1- e^{-b}} < \frac{1}{ 1- e^b} e^{b k},
\]
where the inequality is valid for all $k\geq 1$ and all $b>0$, which is actually the case, since $\beta>\beta_{c}$. Analogously it is also clear that,
\[
 \sum_{m=1}^n  e^{(\beta \alpha - h)m}   = \frac{e^{b n} - 1}{ 1- e^{-b}}.
\]
Letting $K_1 := \frac{1}{1 - e^{-(\beta \alpha -h)}}$, we have that 
\begin{eqnarray*}
\frac{  \frac{1}{2} \sum_{m=1}^k  e^{(\beta \alpha - h)m}  }{  \frac{1}{2} \sum_{m=1}^n  e^{(\beta \alpha - h)m}  +  e^{-hn} } 
&< & \frac{K_{1}   e^{b k} }{ K_{1} (e^{bn} - 1) + 2e^{-hn}}
\\
&=&  \frac{K_{1}   e^{- b (n-k)} }{ K_{1} ( 1 -e^{-bn}) + 2 e^{-(h+b)n}}.
\end{eqnarray*}
Let us consider, the function
\[
g(n) := \frac{K_{1}  }{ K_{1} ( 1 -e^{-bn}) + 2 e^{-(h+b)n}},
\]
it is not difficult to see that $g(n)$ is non negative and bounded above for all $n\in \mathbb{N}$ and $b>0$. The value of $n$ minimizing $g$ is non-trivial (it is not necessarily $n=1$ or $n = \infty$) because $g$ varies non-monotonically with $n$. Thus, let us denote by $J_\beta$ the optimal upper bound for $g$, 
\[
J_\beta := \sup_{n\in \mathbb{N}} \bigg\{\frac{K_{1}  }{ K_{1} ( 1 -e^{-bn}) + 2 e^{-(h+b)n}} \bigg\}.
\]
Thus we have that
\[
\frac{  \frac{1}{2} \sum_{m=1}^k  e^{(\beta \alpha - h)m}  }{  \frac{1}{2} \sum_{m=1}^n  e^{(\beta \alpha - h)m}  +  e^{-hn} } 
<  J_\beta e^{-b (n-k)} = J_\beta e^{- (\beta \alpha -h)(n-k)},
\]
for every $n>k$ and all $b>0$ (that is, for $\beta > \beta_c := h/\alpha$). This result allows us to see that
\begin{eqnarray*}
\hspace{1cm}
\fl
\mu_{\beta,n}(C_k) 
&=& 1- \frac{  \frac{1}{2} \sum_{m=1}^k  e^{(\beta \alpha - h)m}  }{  \frac{1}{2} \sum_{m=1}^n  e^{(\beta \alpha - h)m}  +  e^{-hn} }
> 1- J_\beta e^{-(\beta \alpha -h) (n-k)}.
\end{eqnarray*}
The above result implies that, if we require that $\mu_{\beta,n}(C_k) > 1-\epsilon$ for a given $\epsilon$ it is necessary that 
\[
n > N := k+ \bigg| \frac{\log (\epsilon/J_\beta) }{\beta \alpha -h } \bigg|,
\] 
which proves the lemma. 
\qed

\subsection{Proof of the corollary~\ref{corollary:mass-concentration}}
Let us consider the set $ C_{L}  :=  \left(\bigcup_{k=1}^{L} B_k \right)^\mathrm{c}$, for $L\in\mathbb{N}$.
Now take the measure of $C_L^{\mathrm{c}}$ with respect to $\mu_{\beta,m+L}$ (we remind that $\beta>\beta_{c}$),
\[
\hspace{-2cm}
\mu_{\beta,m+L}\left( C_{L}^\mathrm{c} \right) = \mu_{\beta,m+L}\left( \bigcup_{k=1}^{L} B_k \right) = \sum_{k=1}^L \mu_{\beta,m+L} (B_{k}) =  1- \mu_{\beta,m+L}\left( C_{L}   \right).
\]
Since all the sets $B_{k}$ are pairwise disjoint. By Lemma~\ref{lemma:mass-concentration} we have that for every $\epsilon>0$ there is a $N^\prime$ such that
\begin{equation}
\label{ineq:mu-C-L}
\epsilon > 1- \mu_{\beta,m+L} (C_{L}) =  \sum_{k=1}^L \mu_{\beta,m+L} (B_{k}),
\end{equation}
for all $m+L >N^\prime$ and every $L\in \mathbb{N}$.  The proof of Lemma~\ref{lemma:mass-concentration} states that the value of $N^\prime$ is given by
\[
N^\prime := L + \bigg| \frac{\log (\epsilon/J_\beta) }{\beta \alpha -h } \bigg|.
\]
Therefore we have that the inequality~(\ref{ineq:mu-C-L}) is valid for all $m$ such that
\[
m+L > L +  \bigg| \frac{\log (\epsilon/J_\beta) }{\beta \alpha -h } \bigg|.
\]
This means that choosing $N$ as
\[
N:=\bigg| \frac{\log (\epsilon/J_\beta) }{\beta \alpha -h } \bigg|,
\]
then the corollary is true, concluding the proof.
\qed

\subsection{Proof of lemma~\ref{lemma:averages-f}}

First let us observe that for every $n>m$, the set $B_{n}\subset C_{m}$, and clearly $C_{m}\subset C_{m-1}$. By definition the sets $C_{m-1}$ and also $B_{m}$ both have $2^{m}$ subintervals. We will enumerate them as follows, $C_{m-1,i}$ and $B_{m,i}$ where $i = 1, \ldots, 2^{m-1}$. Notice that the analogous decomposition into $2^{n-1}$ subintervals can be achieved for the set $B_{n}$ (instead of $B_m$).

Next, let us for the moment consider the integral of $f(x)$ over the set $B_{m}$, which because of the disjointness of the subintervals $B_{m,i}$'s, one has that
\[
\frac{1}{\ell(B_{m})}\int_{B_{m}}f(x)dx = \frac{1}{\ell(B_{m})}\sum_{i=1}^{2^{m}}\int_{B_{m,i}}f(x)dx.
\]
Now, given that $f$ is continuous and the integral is taken with respect to Lebesgue, there is no measure contribution at the end points of every subinterval, so we can take the integral over $\overline{B_{m}}$, the closure of $B_{m}$, and thus we can make use of the mean value theorem for integrals, by which for every $j$, there exists a $y_{m,j}^{*}\in \overline{B_{m}}$ such that
\[
\hspace{-1.5cm}
\frac{1}{\ell(B_{m})}\int_{B_{m}}f(x)dx= \frac{1}{\ell(B_{m})}\sum_{i=1}^{2^{m}}\int_{\overline{B_{m,i}}}f(x)dx = \frac{1}{\ell(B_{m})}\sum_{i=1}^{2^{m-1}}f(y^{*}_{m,j})\ell(\overline{B_{m,j}}).
\]
One can write an analogous expression for the integral over $B_{n}$. Next, for every $i=1, \ldots, 2^{m-1}$, let us choose arbitrarily but then fixed, a $x_{i}^{*}\in C_{m-1,i}$, so we can write,

\begin{eqnarray*}
\hspace{-1.5cm}
\Delta :=  \Big| \frac{1}{\ell(B_{n})}\int_{B_{n}}f(x)dx - \frac{1}{\ell(B_{m})}\int_{B_{m}}f(x)dx \Big|  \\
\hspace{-1cm}
=  \Big| \frac{1}{\ell(B_{n})}\sum_{k=1}^{2^{n-1}}f(z^{*}_{n,k})\ell(\overline{B_{n,k}}) - \frac{1}{2^{m-1}}\sum_{i=1}^{2^{m-1}}f(x_{i}^{*}) + \frac{1}{2^{m-1}}\sum_{i=1}^{2^{m-1}}f(x_{i}^{*}) 
\\ \hspace{5.5cm} - \frac{1}{\ell(B_{m})} \sum_{i=1}^{2^{m-1}}f(y^{*}_{m,i})\ell(\overline{B_{m,i}})\Big|.
\end{eqnarray*}
Which, by explicitly  write the length of the sets, is equals to:
\[
\hspace{-2.5cm}
\Delta = \Big| \frac{1}{2^{n-1}}\sum_{k=1}^{2^{n-1}}f(z_{n,k}^{*}) - \frac{1}{2^{m-1}} \sum_{i=1}^{2^{m-1}}f(x_{i}^{*}) +
\frac{1}{2^{m-1}} \sum_{i=1}^{2^{m-1}}f(x_{i}^{*}) - \frac{1}{2^{m-1}}\sum_{i=1}^{2^{m-1}}f(y_{m,i}^{*})\Big|.
\]
By the triangle inequality,
\begin{equation}\label{Delta}
\hspace{-2.5cm}
\Delta \leq \Big| \frac{1}{2^{n-1}}\sum_{k=1}^{2^{n-1}}f(z_{n,k}^{*}) - \frac{1}{2^{m-1}} \sum_{i=1}^{2^{m-1}}f(x_{i}^{*})\Big| +
\Big| \frac{1}{2^{m-1}} \sum_{i=1}^{2^{m-1}}f(x_{i}^{*}) - \frac{1}{2^{m-1}}\sum_{i=1}^{2^{m-1}}f(y_{m,i}^{*})\Big|.
\end{equation}
We continue by estimating an upper bound for each part. Take first the second one, and let $K$ be the Lipschitz constant of $f$ so one has that,
\begin{eqnarray*}
\hspace{-2cm}
\Big| \frac{1}{2^{m-1}} \sum_{i=1}^{2^{m-1}}f(x_{i}^{*}) - \frac{1}{2^{m-1}}\sum_{i=1}^{2^{m-1}}f(y_{m,i}^{*})\Big| &\leq &
\frac{1}{2^{m-1}}\sum_{i}\big| f(x_{i}^{*} - f(y_{m,i}^{*})\big| \\
&\leq &\frac{1}{2^{m-1}} \sum_{i} K |x_{i}^{*} - y_{m,i}^{*}| \leq \frac{K}{3^{m-1}}.
\end{eqnarray*}
It remains to find an upper bound for the first part of equation (\ref{Delta}). Observe that we may write it as follows,
\begin{eqnarray*}
\hspace{-2cm}
\Big| \frac{1}{2^{n-1}}\sum_{k=1}^{2^{n-1}}f(z_{n,k}^{*}) - \frac{1}{2^{m-1}} \sum_{i=1}^{2^{m-1}}f(x_{i}^{*})\Big| = 
\\
\Big| \frac{1}{2^{m-1}}\sum_{i=1}^{2^{m-1}}\frac{1}{2^{n-m}}\sum_{j=(i-1)(2^{n-m})+1}^{i(2^{n-m})}f(z_{n,j}^{*}) - \frac{1}{2^{m-1}}\sum_{i=1}^{2^{m-1}}f(x_{i}^{*})\Big|.
\end{eqnarray*}
Notice that the summation over $j$ involves those values of $z_{n,j}^{*}$  belonging to the set $C_{m,i}$. This means that all the $z_{n,j}^{*}$ in the summation are as near to $x^*_i$ as $3^{-{m+1}}$ because $x_i^*\in C_{m-1,i}$. Now let us consider the part inside the sum over the $j$ index in the expression above. For each $i$ fixed, we define $\tilde{f}_{i}$ as the average function 
\[
\tilde{f}_{i} := \frac{1}{2^{n-m}}\sum_{j}f(z_{n,j}^{*}).
\]
Since $f$ is continuous in general, but specifically on every $C_{m-1,i}$, as a consequence of the Intermediate value theorem, there must exist a $\tilde{z}_{i}\in C_{m-1,i}$ such that $f(\tilde{z}_{i}) = \tilde{f}_{i}$. And thus, one may write, 
\begin{eqnarray*}
\hspace{-2.5cm}
\Big| \frac{1}{2^{m-1}}\sum_{i}\frac{1}{2^{n-m}}\sum_{j} f(z_{n,j}^{*}) - \frac{1}{2^{m-1}}\sum_{i} f(x_{i}^{*})\Big|  \leq
\frac{1}{2^{m-1}}\sum_{i}\Big| \sum_{j} f(z_{n,j}^{*}) - f(x_{i}^{*})\Big|  \\
\hspace{1cm}
=\frac{1}{2^{m-1}}\sum_{i}\Big| f(\tilde{z}_{i}) - f(x_{i}^{*})\Big| \leq \frac{1}{2^{m-1}}\sum_{i} K |\tilde{z}_{i} - x_{i}^{*}|\\ 
\hspace{1cm}
\leq \frac{1}{2^{m-1}} K \sum_{i}\max\{ |z_{n,j}^{*} - x_{i}^{*}|\}\leq \frac{K}{3^{m-1}}.
\end{eqnarray*}
Therefore, putting both parts together, one has that  $\Delta \leq \frac{2K}{3^{m-1}}$, so by choosing a $m$ sufficiently large one proves the lemma.
\qed

\subsection{Proof of lemma~\ref{lemma:mass-concentration-bar}}

 Given $n>m$ and $l = 1,\ldots, m$. Consider the measures $\mu_{\beta,n}\left( B_{n-l} \right)$ and $\mu_{\beta,m}\left( B_{m-l} \right)$ associated to the finite-step potential $n$ and $m$ of the sets $B_{n-l}$ and $B_{m-l}$, respectively. By the definition, one has that,
\[
\big| \mu_{\beta,n}\left( B_{n-l} \right) -  \mu_{\beta,m}\left( B_{m-l} \right)   \big| = 
\frac{1}{2}\bigg| \frac{e^{(\beta  \alpha -h)(n-l)}}{Z_{\beta,n}} - \frac{e^{(\beta  \alpha -h)(m-l)}}{Z_{\beta,m}} \bigg|.
\]
For the sake of brevity, let us define $b:=\beta  \alpha -h>0$, by hypothesis; since $\beta > \beta_c = h/\alpha$. So we may write that 
\begin{equation}
\big| \mu_{\beta,n}\left( B_{n-l} \right) -  \mu_{\beta,m}\left( B_{m-l} \right)   \big| = 
\frac{1}{2}\bigg| \frac{e^{b n}}{Z_{\beta,n}} - \frac{e^{bm}}{Z_{\beta,m}} \bigg| e^{-bl}.
\label{eq:diff-mus-ebl}
\end{equation}
Now, let  $K_1 := \frac{1}{1-e^{-b}}$, so we can also write the partition function $Z_{\beta,n}$ as follows:
\[
Z_{\beta,n} = \frac{1}{2} \sum_{k=1}^n e^{bk} + e^{-hn} = K_1 e^{bn} + e^{-hn}.
\]
Using the expression above we have that
\begin{eqnarray*}
\fl
\frac{1}{2}\bigg| \frac{e^{b n}}{Z_{\beta,n}} - \frac{e^{bm}}{Z_{\beta,m}} \bigg|  &=& 
\bigg|  \frac{e^{bn}}{ K_1 (e^{bn}-1) + 2 e^{-hn}} - \frac{e^{bm}}{ K_1 (e^{bm}-1) + 2 e^{-hm}} \bigg|
\nonumber
\\
\fl
&=&
\bigg|  \frac{1}{ K_1- K_1 e^{-bn} + 2 e^{-(h+b)n}} -  \frac{1}{ K_1- K_1 e^{-bm} + 2 e^{-(h+b)m}} \bigg|
\nonumber
\\
\fl
&=&
\bigg| \frac{K_1 (e^{-bn} -e^{-bm}) - 2(e^{- \beta \alpha n}- e^{- \beta \alpha m })}{\left( K_1- K_1 e^{-bn} + 2 e^{- \beta \alpha n}\right)\left( K_1- K_1 e^{-bm} + 2 e^{- \beta \alpha m} \right)} \bigg|.
\end{eqnarray*}

Next we define the function $g_1 : \mathbb{N} \to \mathbb{R}$ as follows
\[
g_1(k) :=  K_1- K_1 e^{-b k } + 2 e^{- \beta \alpha k},
\]
which is a  bounded above non-negative function. The latter means that $g_1$ attains an infimum, which we denote by $ G_1$, such that $0<  G_1 < \infty$. So we are able to give the following estimate
\begin{eqnarray*}
\frac{1}{2}\bigg| \frac{e^{b n}}{Z_{\beta,n}} - \frac{e^{bm}}{Z_{\beta,m}} \bigg|  
&\leq&
\frac{1}{G_{1}^{2}}
\big| G_1 (e^{-bn} -e^{-bm}) - 2(e^{-\beta \alpha n}- e^{- \beta \alpha m }) \big|
\\
&\leq& 
\frac{1}{G_{1}^{2}}
K_1 \big| e^{-bn} -e^{-bm}\big| + 2\big|e^{- \beta \alpha n}- e^{- \beta \alpha m } \big|
\\
&\leq& 
\frac{1}{G_{1}^{2}}
(K_1  + 2)\big| e^{-bn} -e^{-bm}\big|,
\end{eqnarray*}
since $b = \beta \alpha - h$. Therefore, from~(\ref{eq:diff-mus-ebl}), the last inequality and making the sum over the index $l$, one has
\begin{eqnarray*}
\fl
\sum_{l=0}^{L-1} \big| \mu_{\beta,n}\left( \overline{C}_{n-l} \right) - \big| \mu_{\beta,m}\left( \overline{C}_{m-l} \right)   \big| e^{-bl} 
&\leq& \sum_{l=0}^{L-1}
\frac{1}{G_{1}^2} (K_1  + 2)\big| e^{-bn} -e^{-bm}\big|  e^{-bl}
\\
&\leq&
\frac{1}{G_{1}^2} (K_1  + 2)\left(\big| e^{-bn}\big| + \big| e^{-bm}\big|\right) \left( \frac{1-e^{-bL}}{1-e^{-b}}\right)
\\
&\leq&
\frac{ 2(K_1  + 2)}{G_{1}^2}K_1 e^{-bm}.
\end{eqnarray*}
Observe from the previous expression that taking $m$ larger than the following quantity, which we define by $N$, 
\[
N :=  \left \lceil{\bigg|\log \left( \frac{\epsilon G_{1}^{2}K_1 }{2 (K_1  + 2)}\right)\bigg|}\right \rceil ,
\]
then one assures that 
\[
\sum_{l=0}^{L-1} \big| \mu_{\beta,n} \left( B_{n-l} \right)  -\mu_{\beta,m} \left(B_{m-l} \right)  \big| < \epsilon,
\]
which proves the lemma. 
\qed

\noindent
\subsection{Proof of lemma~\ref{lemma:residual}}
By definition of the measure $\mu_{\beta,n}(C_n) $ of the set $C_{n}$, we have that, 
\[
\mu_{\beta,n}(C_n)   = \frac{ e^{-h n}}{Z_{\beta,n}}.
\]
Since $Z_{\beta,n}$ can be written explicitly as,
\[
Z_{\beta,n} = K_1 e^{(\beta \alpha -h )n} + e^{-hn},
\]
(cf.~proof of Lemma~\ref{lemma:mass-concentration-bar}) then it is clear that
\[
\mu_{\beta,n}(C_n)   = \frac{e^{-hn}}{ K_1 e^{(\beta \alpha-h)n} + e^{-hn}} =  \frac{1}{ K_1 e^{\beta\alpha  n} +1} < \frac{e^{-\beta \alpha n}}{K_1}.
\]
Given $\epsilon$, one can take, 
\[
N:=-\frac{\log \left( \epsilon K_1\right)}{\beta \alpha}  
\]
and thus, for all $n> N $, we have that 
\[
\mu_{\beta,n}(C_n)<\epsilon
\]
which proves our claim.
\qed
\\


\nocite{*}

\bibliography{fractal_refs.bib} 


\end{document}